\newcommand{\naturals}{\mathbb{N}}
\newcommand{\real}{\mathbb{R}}
\newcommand{\realnonneg}{\mathbb{R}_{\ge 0}}
\newcommand{\map}[3]{#1:#2 \rightarrow #3}
\newcommand{\longthmtitle}[1]{\mbox{}{\textit{(#1):}}}
\newcommand{\setdefb}[2]{\big\{#1 \; | \; #2\big\}}
\newcommand{\setdefB}[2]{\Big\{#1 \; | \; #2\Big\}}
\newcommand*{\SetSuchThat}[1][]{} 
\newcommand*{\MvertSets}{%
    \renewcommand*\SetSuchThat[1][]{%
        \mathclose{}%
        \nonscript\;##1\vert\penalty\relpenalty\nonscript\;%
        \mathopen{}%
    }%
}
\DeclarePairedDelimiterX \Set [2] {\lbrace}{\rbrace}
    {\,#1\SetSuchThat[\delimsize]#2\,}
\newcommand{\Cc}{\mathcal{C}}
\newcommand{\Kc}{\mathcal{K}}
\newcommand{\Tc}{\mathcal{T}}
\newcommand{\Sc}{\mathcal{S}}
\newcommand{\R}{\mathbb{R}}
\newcommand{\Ke}{\Kc^{\rm e}}
\newtheorem{theorem}{Theorem}
\newtheorem{lemma}{Lemma}
\newtheorem{corollary}{Corollary}
\newtheorem{proposition}{Proposition}
\theoremstyle{definition}
\newtheorem{definition}{Definition}
\newtheorem{remark}{Remark}
\newcommand{\bb}{\mathbf{b}}
\renewcommand{\bf}{\mathbf{f}} 
\newcommand{\bg}{\mathbf{g}}
\newcommand{\bk}{\mathbf{k}}
\newcommand{\bu}{\mathbf{u}}
\newcommand{\bv}{\mathbf{v}}
\newcommand{\bx}{\mathbf{x}}
\newcommand{\bF}{\mathbf{F}}
\newcommand{\bI}{\mathbf{I}}
\newcommand{\bGamma}{\boldsymbol{\Gamma}}
\newcommand{\bzero}{\mathbf{0}}
\newcommand{\des}{{\operatorname{d}}}
\newcommand{\relu}{{\operatorname{ReLU}}}
\DeclareMathOperator*{\argmin}{argmin}
\title{\textbf{On the Properties of Optimal-Decay Control Barrier Functions}}
\author{Pio Ong, Max H. Cohen, Tamas G. Molnar, Aaron D. Ames %
\thanks{PO, MC and AA are with the Department of Mechanical and Civil Engineering, California Institute of Technology, Pasadena, CA \texttt{\{pioong,maxcohen,ames\}@caltech.edu}.}
\thanks{TM is with the Department of Mechanical Engineering, Wichita State University, Wichita, KS \texttt{\{tamas.molnar\}@wichita.edu}.}
\thanks{This research was supported by the Technology Innovation Institute (TII), Boeing, and NSF CPS Award \#1932091.}
}
\begin{document}

\maketitle
\begin{abstract}
Control barrier functions provide a powerful means for synthesizing safety filters that ensure safety framed as forward set invariance.  Key to CBFs' effectiveness is the simple inequality on the system dynamics: $\dot{h} \geq - \alpha(h)$.  Yet determining the class $\mathcal{K}^e$ function $\alpha$ is a user defined choice that can have a dramatic effect on the resulting system behavior. This paper formalizes the process of choosing $\alpha$ using {\em optimal-decay control barrier functions} (OD-CBFs). These modify the traditional CBF inequality to: $\dot{h} \geq - \omega \alpha(h)$, where $\omega \geq 0$ is automatically determined by the safety filter. A comprehensive characterization of this framework is elaborated, including tractable conditions on OD-CBF validity, control invariance of the underlying sets in the state space, forward invariance conditions for safe sets, and discussion on optimization-based safe controllers in terms of their feasibility, Lipschitz continuity, and closed-form expressions.   The framework also extends existing higher-order CBF techniques, addressing safety constraints with vanishing relative degrees. The proposed method is demonstrated on a satellite control problem in simulation.
\end{abstract}

\section{Introduction}

Control barrier functions (CBFs), first conceptualized in \cite{WielandSNCS07} and later refined in
\cite{AmesTAC17}, provide a framework for safety-critical control of nonlinear systems through the concept of set invariance \cite{Blanchini}. A core contribution of the modern version of CBFs introduced in \cite{AmesTAC17} over previous versions \cite{WielandSNCS07}, and other instantiations of barrier functions (BFs) \cite{PrajnaHSCC05}, is the addition of an extended class $\mathcal{K}$ function, denoted by $\mathcal{K}^e$, to the Lyapunov-like inequality characterizing safety. This modification has important practical ramifications: rather than every superlevel set, only the zero superlevel set of the BF
is rendered forward invariant; this set is also, under certain assumptions, asymptotically stable \cite{AmesADHS15,TeelTAC24}; and the set of safe controllers meeting the corresponding CBF requirement is greatly expanded, facilitating the use of optimization-based control approaches, typically via quadratic programs (QPs). This modern notion of a BF also closes the gap between the necessary and sufficient conditions for set invariance originally stated by Nagumo \cite{MN:42} (see \cite[Ch. 4.2]{Blanchini} and \cite[Ch. 4.1]{AbrahamMarsdenRatiu} for a modern account) in that compact forward invariant sets necessarily admit a BF satisfying the Lyapunov-like conditions in \cite{AmesTAC17}.

Although the modern version of CBFs \cite{AmesTAC17} has expanded the practical applications of safety-critical control techniques, this class $\mathcal{K}^e$ function characterization of set invariance also brings certain drawbacks. Namely, one must now not only search for a control invariant set -- the zero superlevel set of the corresponding CBF -- contained within a desirable set of system states (that is a challenging problem on its own), but also for a class $\mathcal{K}^e$ function compatible with the dynamics and invariant set. When the CBF has a uniform relative degree and the set of admissible inputs is unbounded, any class $\mathcal{K}^e$ function will be compatible with the dynamics and CBF. However, if there exist states where the (candidate) CBF's relative degree vanishes or the set of admissible inputs is bounded, the choice of class $\mathcal{K}^e$ function significantly impacts the validity of this candidate CBF. 

The challenges associated with searching for a compatible class $\mathcal{K}^e$ function in the modern CBF framework are further amplified when using higher-order CBF techniques, such as exponential CBFs (ECBFs) \cite{QN-KS:16} and high-order CBFs (HOCBFs) \cite{WeiTAC22,TanTAC22}. Such techniques dynamically extend a given safety constraint that does not meet the criteria of a CBF to a new function that, under certain conditions, is a valid certificate for set invariance and satisfies a Lyapunov-like inequality characterized by a class $\mathcal{K}^e$ function. Yet, as illustrated in, e.g., \cite[Example 1]{PO-MHC-TGM-ADA:25-csl} and \cite[Example 5]{CohenARC24}, there often exists no class $\mathcal{K}^e$ function compatible with the dynamics and corresponding ECBF/HOCBF if there exist points within the safe set where the relative degree of such an ECBF/HOCBF vanishes, even if the admissible set of inputs is unbounded. That is, even for seemingly benign dynamics and safety constraints, ECBFs and HOCBFs, as defined in \cite{QN-KS:16,WeiTAC22,TanTAC22}, may not exist. Alternative higher-order CBF techniques, such as rectified CBFs (ReCBFs) \cite{PO-MHC-TGM-ADA:25-csl} and backstepping CBFs \cite{AndrewCDC22,CohenLCSS24}, maintain compatibility of the class $\mathcal{K}^e$ function under vanishing relative degree by construction, but this phenomenon can often degrade performance by producing controllers with large Lipschitz constants. Indeed, as studied in \cite{BrunkeArXiV24}, sampled-data implementations of CBF-based controllers with vanishing relative degree can produce chattering despite the fact that, under certain assumptions \cite{MJ:18}, such inputs are guaranteed to be locally Lipschitz when implemented in continuous-time.

One technique that has the potential to address some of these aforementioned limitations is to scale the class $\mathcal{K}^e$ function by a tunable coefficient whose value is optimized, in real-time, alongside the control input. Specifically, CBF-based control inputs are typically computed as the solution to a QP, and by adding a tunable coefficient as a decision variable to this QP, one effectively gains an additional control input that provides more flexibility to satisfy the Lyapunov-like conditions for set invariance. This approach is used often among practitioners, but there is relatively little information in the formal CBF literature concerning the properties of this approach, with exceptions being  \cite{AmesIEEEA20} and \cite{ZengACC21}. In particular, \cite{AmesIEEEA20} is, to our knowledge, the first to explicitly acknowledge that one may optimize such a coefficient within the CBF-QP, but does not establish any properties of the resulting controller. The most in-depth study of this technique is presented in \cite{ZengACC21} in the context of ensuring feasibility for CBF-QPs in the presence of bounded inputs, where the corresponding controller is referred to as an \emph{optimal decay} CBF-QP (OD-CBF-QP). Importantly,  \cite{ZengACC21} establishes that this OD-CBF-QP is feasible for any point on the interior of the safe set, but does not make any statements regarding feasibility on the boundary (a necessary condition for safety), feasibility outside the safe set (useful for ensuring stability of the safe set), or Lipschitz continuity of the resulting controller (needed to guarantee existence and uniqueness of solutions to the closed-loop system).

In this paper, we build upon the insights of \cite{AmesIEEEA20, ZengACC21} to provide a rigorous characterization of OD-CBFs and their properties. We begin with motivating our developments by introducing a generalized notion of a BF, an optimal decay BF (OD-BF). The main advantage of OD-BFs over those in \cite{AmesTAC17} is that OD-BFs provide tighter connections to Nagumo's conditions for set invariance: OD-BFs are necessary and sufficient for set invariance without any compactness assumptions. With this as motivation, we then revisit OD-CBFs as introduced in \cite{AmesIEEEA20, ZengACC21}. Here, we provide tractable conditions to verify that a given function is an OD-CBF, establish the feasibility of the corresponding OD-CBF-QP at all points in the safe set, illustrate that the resulting controller is locally Lipschitz continuous, and that it renders the safe set forward invariant. Under further assumptions, we show that the OD-CBF-QP is also feasible outside the safe set and renders the safe set asymptotically stable if it is compact. We then extend these developments to higher-order CBF techniques to address the aforementioned issues with a vanishing relative degree, and highlight via numerical examples the advantage of the OD-CBF framework over traditional CBF approaches. Finally, we showcase the developed framework on a more complicated satellite control problem. 

\section{Preliminaries}

\textbf{Forward Invariance:}
Consider a the nonlinear system\footnote{
We denote the sets of natural, real, and nonnegative real numbers by $\naturals$, $\real$, and $\realnonneg$. Given ${\bu\in\real^m}$, ${\|\bu\| = \sqrt{\bu^\top \bu}}$ is its Euclidean norm, while ${\|\bu\|_{\bGamma} = \sqrt{\bu^\top \bGamma \bu}}$ with a positive definite matrix ${\bGamma \in \real^{m \times m}}$. For continuously differentiable functions
${\map{h}{\real^n}{\real}}$,
${\map{\bf}{\real^n}{\real^n}}$,
${\map{\bg}{\real^n}{\real^{n\times m}}}$,
we use the Lie derivatives
${L_{\bf}h(\bx) = \frac{\partial h}{\partial \bx}(\bx) \cdot \bf(\bx)}$ and
${L_{\bg}h(\bx) = \frac{\partial h}{\partial \bx}(\bx) \cdot \bg(\bx)}$.
Function ${\map{\alpha}{\realnonneg}{\real}}$ is of class-$\Kc$ (${\alpha\in\Kc}$) if it is continuous, strictly increasing, and satisfies ${\alpha(0)=0}$.
If the same properties hold for a function ${\map{\alpha}{\real}{\real}}$, then it is of extended class-$\Kc$ (${\alpha\in\Ke}$).
$\relu$ denotes the rectified linear unit, i.e., ${\relu(\cdot) = \max\{0,\cdot\}}$.}:
\begin{equation}
    \label{sys:autonomous}
    \dot \bx = \bF(\bx),
\end{equation}
with state ${\bx\in\real^n}$ and dynamics $\map{\bF}{\real^n}{\real^n}$. One desired system property is \textit{safety}, where state trajectories $\bx(t)$ must remain in a safety constraint set defined by ${\map{\psi}{\real^n}{\real}}$:
\begin{equation}\label{eq:safety_constraint}
    \Sc = \setdefb{\bx\in\real^n}{\psi(\bx)\geq 0}.
\end{equation}
In other words, safety requires ${\bx(t)\in\Sc}$ for all time. A related concept to this is set \textit{forward invariance}.
\begin{definition}\longthmtitle{Forward Invariance} A set ${\Cc \subset \real^n}$ is \textbf{forward invariant} for system~\eqref{sys:autonomous} if, for any initial condition ${\bx_0\in\Cc}$, the trajectory $\bx(t)$ is contained within $\Cc$, i.e., ${\bx(t)\in\Cc}$ for all time $t\geq 0$. The set $\Cc$ is a \textbf{safe set} for~\eqref{sys:autonomous} if it is also a subset of the safety constraint set, i.e., $\Cc\subseteq \Sc$.
\end{definition}
To verify safety, we seek a forward invariant set $\Cc\subseteq\Sc$, ideally $\Sc$ itself, which ensures that the system can operate safely if it is initialized within the set $\Cc$.

We construct a forward invariant set $\Cc$ as the 0-superlevel set of a continuously differentiable function ${\map{h}{\real^n}{\real}}$:
\begin{equation}\label{eq:safeset}
\begin{aligned} 
    \Cc &= \setdefb{\bx\in\real^n}{h(\bx)\geq 0},\\
    \partial\Cc &= \setdefb{\bx\in\real^n}{h(\bx)= 0}. 
\end{aligned}
\end{equation}
Barrier functions certify that $\Cc$ is forward invariant.
\begin{definition}\longthmtitle{Barrier Function}
    The function $h$ from~\eqref{eq:safeset} is a \textbf{barrier function} (BF) for~\eqref{sys:autonomous} on~$\Cc$ if there exists $\alpha\in\Kc$ such that for all $\bx\in\Cc$:
    \begin{equation}\label{eq:BC}
    L_\bF h(\bx) \geq -\alpha(h(\bx)).
    \end{equation}
\end{definition}
\begin{lemma}\label{lem:BF-invariance}
    \!Let $h$ be a BF for~\eqref{sys:autonomous} on~$\Cc$.
    If $\bF$ is locally Lipschitz and $\frac{\partial h}{\partial \bx}(\bx) \neq \bzero$ when $h(\bx)=0$, then $\Cc$ is forward invariant.
\end{lemma}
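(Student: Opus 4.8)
\emph{Proof plan.} Since $\bF$ is locally Lipschitz, through every $\bx_0\in\Cc$ there passes a unique solution $\bx(\cdot)$ of~\eqref{sys:autonomous} on a maximal interval $[0,T_{\max})$, and solutions depend continuously on the initial condition. The goal is to show $h(\bx(t))\ge 0$ for all $t\in[0,T_{\max})$, which is what forward invariance of $\Cc$ amounts to here (if one insists on solutions defined for all $t\ge 0$, that follows under an extra boundedness hypothesis such as $\Cc$ compact, and is not the point of this lemma). I would organize the argument around a comparison estimate that keeps $\eta(t):=h(\bx(t))$ nonnegative while the trajectory stays in $\Cc$, together with a Nagumo-type boundary argument that prevents it from ever leaving $\Cc$.

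For the comparison part, note that whenever $\bx(t)\in\Cc$ we have $\eta(t)\ge 0$ and, by the BF inequality~\eqref{eq:BC}, $\dot\eta(t)=L_\bF h(\bx(t))\ge-\alpha(\eta(t))$. In particular, if $h(\bx_0)>0$ and $t^\star$ were the first time $\eta$ hits $0$, then on $[0,t^\star)$ one may divide by $\alpha(\eta)>0$ and integrate to get $\int_{\eta(t)}^{\eta(0)}\tfrac{d\sigma}{\alpha(\sigma)}\le t<t^\star$, so $\eta$ cannot reach $0$ before time $\int_0^{\eta(0)}\tfrac{d\sigma}{\alpha(\sigma)}>0$; when $\int_{0^+}\tfrac{d\sigma}{\alpha(\sigma)}=+\infty$ this already gives $\eta>0$ on all of $[0,T_{\max})$, using nothing about $\nabla h$. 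Either way, the only route by which the trajectory could exit $\Cc$ is by first touching $\partial\Cc$ at some time $t_b\ge 0$, and the crux is to rule out crossing into $\{h<0\}$ there.

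This is exactly where the hypothesis $\nabla h(\bx)\ne\bzero$ on $\partial\Cc$ is used. For $\bx\in\partial\Cc$, since $h\in C^1$ with $\nabla h(\bx)\ne\bzero$, the tangent cone to $\Cc$ at $\bx$ is the half-space $T_\Cc(\bx)=\{\bv\in\real^n:\nabla h(\bx)\cdot\bv\ge 0\}$, while~\eqref{eq:BC} gives $\nabla h(\bx)\cdot\bF(\bx)=L_\bF h(\bx)\ge-\alpha(h(\bx))=-\alpha(0)=0$, i.e.\ $\bF(\bx)\in T_\Cc(\bx)$. Nagumo's theorem in the form valid for locally Lipschitz vector fields (see~\cite[Ch.~4.2]{Blanchini}) then yields that $\Cc$ is forward invariant. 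In the same spirit, one can give a more hands-on proof of this boundary step: perturbing $\bx_0\in\partial\Cc$ slightly in the direction of $\nabla h(\bx_0)$ — permissible precisely because it is nonzero — produces interior points $\bx_0^k\to\bx_0$ with $h(\bx_0^k)>0$ by a first-order expansion, one controls the perturbed trajectories by the estimate above, and continuous dependence on initial conditions transfers $h(\bx(t))\ge 0$ to the limit uniformly on compact time intervals.

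I expect the boundary step to be the real obstacle. The comparison lemma is blind to it — it constrains $\eta$ only while the trajectory already lies in $\Cc$ — and no sharpening of the comparison estimate can substitute for analyzing $\partial\Cc$: the non-vanishing gradient is exactly what forces the tangent cone to be a half-space, and the lemma is genuinely false without it (for $h(\bx)=-\|\bx\|^2$ the set $\Cc=\{\bzero\}$ satisfies~\eqref{eq:BC} for every $\bF$ but is forward invariant only if $\bF(\bzero)=\bzero$). A secondary technical point worth flagging is that $\alpha\in\Kc$ need not be locally Lipschitz, so the comparison step must be run in integral form (or via maximal solutions) rather than by invoking uniqueness for $\dot y=-\alpha(y)$ — one more reason the Nagumo route is tidy.
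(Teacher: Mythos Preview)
The paper does not supply a proof of this lemma; it is stated as a preliminary result and left unproved, so there is no argument to compare against directly. Your approach is nonetheless correct and in the spirit of how the paper handles invariance elsewhere: the immediately following Lemma~\ref{lem:not-nagumo} is attributed to Nagumo's theorem, and the proof of Proposition~\ref{prop:OD-HOCBF} establishes set invariance by verifying that the closed-loop vector field lies in the tangent cone at each boundary point and then invoking Nagumo --- exactly your boundary step. Your identification of the tangent cone as the half-space $\{\bv:\nabla h(\bx)\cdot\bv\ge 0\}$ under the nonzero-gradient hypothesis, your check that~\eqref{eq:BC} forces $\bF(\bx)$ into that cone on $\partial\Cc$, and your counterexample showing the hypothesis cannot be dropped are all correct. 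The comparison-lemma detour is sound as stated but, as you yourself note, cannot close the argument on its own; a polished write-up can go straight to Nagumo.
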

Condition~\eqref{eq:BC} enforced by a BF is a sufficient condition for establishing forward invariance. A tighter condition is given in the following result. 
\begin{lemma}\label{lem:not-nagumo}
    Let the system dynamics $\bF$ for~\eqref{sys:autonomous} be locally Lipschitz and suppose that $\frac{\partial h}{\partial \bx}(\bx) \neq \bzero$ when $h(\bx)=0$. Then, $\Cc$ is forward invariant if and only if:
    \begin{equation}
        \label{eq:nagumo}
        h(\bx)=0 \implies L_\bF h(\bx)\geq 0. 
    \end{equation}
\end{lemma}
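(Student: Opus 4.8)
The plan is to prove the two implications separately: necessity is immediate, while sufficiency is Nagumo's theorem (cf.\ \cite[Ch.~4.2]{Blanchini}) specialized to a $C^1$ superlevel set with nonvanishing gradient, which I would establish from scratch via a Gronwall estimate on the distance to $\Cc$. For \emph{necessity}, suppose $\Cc$ is forward invariant and $h(\bx)=0$; local Lipschitzness of $\bF$ gives a solution $\bx(\cdot)$ of class $C^1$ in $t$ with $\bx(0)=\bx$, and forward invariance yields $h(\bx(t))\geq 0 = h(\bx(0))$ for small $t\geq 0$, so the scalar $C^1$ map $t\mapsto h(\bx(t))$ has a local minimum at $t=0$ from the right, whence its right derivative there, which equals $L_\bF h(\bx)$, is nonnegative. (This direction uses neither the gradient hypothesis nor Lemma~\ref{lem:BF-invariance}.)

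For \emph{sufficiency}, assume \eqref{eq:nagumo} and, for contradiction, suppose some solution with $\bx(0)\in\Cc$ has $h(\bx(T))<0$ for some $T>0$. Let $t_1\defeq\sup\{t\in[0,T]:h(\bx(t))\geq 0\}$; by continuity $h(\bx(t_1))=0$, so $\bx(t_1)\in\partial\Cc$, while $\bx(t)\notin\Cc$ for all $t\in(t_1,T]$. Choose a compact neighborhood $\bar U$ of $\bx(t_1)$ on which $\|\nabla h\|\geq c>0$ and $\bF$ is $L$-Lipschitz, and $\delta\in(0,T-t_1]$ small enough that for $t\in[t_1,t_1+\delta]$ both $\bx(t)$ and some nearest point of $\bx(t)$ in $\Cc$ lie in $\bar U$. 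Put $\rho(t)\defeq\operatorname{dist}(\bx(t),\Cc)$; for $t\in(t_1,t_1+\delta]$ we have $\rho(t)>0$, and for a nearest point $\bx^\star(t)\in\Cc$ (necessarily in $\partial\Cc=\{h=0\}$) first-order optimality makes $\bx(t)-\bx^\star(t)$ parallel to $\nabla h(\bx^\star(t))$; since $h$ decreases across $\partial\Cc$ and $\bx(t)$ lies on the side $h<0$, the unit vector $n(t)\defeq(\bx(t)-\bx^\star(t))/\rho(t)$ equals $-\nabla h(\bx^\star(t))/\|\nabla h(\bx^\star(t))\|$. The standard estimate $D^+\rho(t)\leq\langle n(t),\bF(\bx(t))\rangle$ for the upper Dini derivative of a distance function, combined with the $L$-Lipschitz bound and $\langle n(t),\bF(\bx^\star(t))\rangle=-L_\bF h(\bx^\star(t))/\|\nabla h(\bx^\star(t))\|\leq 0$ from \eqref{eq:nagumo}, then yields $D^+\rho(t)\leq L\,\rho(t)$ on $(t_1,t_1+\delta]$. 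Since $\rho$ is continuous with $\rho(t_1)=0$, the Gronwall comparison principle for Dini derivatives forces $\rho\equiv 0$ on $[t_1,t_1+\delta]$, i.e.\ $\bx(t)\in\Cc$ there, contradicting $\bx(t)\notin\Cc$ for $t\in(t_1,t_1+\delta]$.

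The main obstacle is the inequality $D^+\rho\leq L\,\rho$, which is exactly where both hypotheses are needed. Local Lipschitzness of $\bF$ is what converts the geometric sub-tangentiality condition at $\partial\Cc$ into a \emph{linear} differential inequality for $\rho$, and linearity (an Osgood-type condition) is essential: a merely continuous bound $D^+\rho\leq\alpha(\rho)$ with $\alpha\in\Kc$ — e.g.\ $\alpha(r)=\sqrt r$ — would not force $\rho\equiv 0$. The hypothesis $\frac{\partial h}{\partial\bx}(\bx)\neq\bzero$ on $\{h=0\}$ makes $\partial\Cc$ a genuine $C^1$ hypersurface whose tangent cone at $\bx$ is $\{\bv:\nabla h(\bx)\cdot\bv\geq 0\}$, so that \eqref{eq:nagumo} is precisely Nagumo's sub-tangentiality condition and also pins down the projection direction used above. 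An equivalent, perhaps more intuitive route straightens $\partial\Cc$ near $\bx(t_1)$ via the submersion normal form so that $\Cc$ is locally the half-space $\{z_1\geq 0\}$; then $z_1=0\Rightarrow\dot z_1\geq 0$ together with Lipschitzness gives $\dot z_1\geq -L|z_1|$ and the same conclusion, though the distance-function argument is cleaner as it avoids tracking the regularity of the coordinate change.
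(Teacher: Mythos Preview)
Your proof is correct. The paper does not actually prove Lemma~\ref{lem:not-nagumo}; immediately after the statement it simply remarks that ``the above result is derived from Nagumo's theorem~\cite{MN:42}'' and moves on. So there is no detailed argument in the paper to compare against---you have supplied the content the paper deferred to the literature.

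Your necessity direction is the standard one-line argument and is fine. For sufficiency you give a self-contained proof of the relevant special case of Nagumo's theorem via the distance function and a Gronwall/Dini comparison, which is exactly the classical route (see, e.g., \cite[Ch.~4.2]{Blanchini} or \cite[Ch.~4.1]{AbrahamMarsdenRatiu}). The ingredients are all in order: closedness of $\Cc$ gives existence of a nearest point; the regular-value hypothesis pins down the outward normal as $-\nabla h/\|\nabla h\|$ via the KKT condition for the projection; the inequality $D^{+}\rho(t)\leq \langle n(t),\bF(\bx(t))\rangle$ follows from $\rho(t+s)\leq\|\bx(t+s)-\bx^\star(t)\|$; and the split $\langle n,\bF(\bx)\rangle=\langle n,\bF(\bx)-\bF(\bx^\star)\rangle+\langle n,\bF(\bx^\star)\rangle$ together with~\eqref{eq:nagumo} and local Lipschitzness yields $D^{+}\rho\leq L\rho$. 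The only point worth tightening in a final write-up is the sentence guaranteeing that both $\bx(t)$ and $\bx^\star(t)$ stay in the compact neighborhood $\bar U$ for $t\in[t_1,t_1+\delta]$: you should note explicitly that $\|\bx^\star(t)-\bx(t_1)\|\leq \rho(t)+\|\bx(t)-\bx(t_1)\|\leq 2\|\bx(t)-\bx(t_1)\|$, so continuity of $\bx(\cdot)$ suffices. Your closing remarks on why Lipschitzness (an Osgood condition) and the nonvanishing gradient are genuinely needed are accurate and add useful context.
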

The above result is derived from Nagumo's theorem~\cite{MN:42} and suggests that the barrier condition~\eqref{eq:BC} only needs to hold where $h(\bx)=0$ without the need to find the class-$\Kc$ function $\alpha$. With an additional assumption, the gap between the two results can be closed.
\begin{proposition}\longthmtitle{Converse Barrier Function \cite{AmesTAC17}}\label{prop:converse_BF}
    Let the system dynamics $\bF$ for~\eqref{sys:autonomous} be locally Lipschitz and suppose that $\frac{\partial h}{\partial \bx}(\bx) \neq \bzero$ when $h(\bx)=0$. Additionally, assume that $h$ has compact superlevel sets. Then, $\Cc$ is forward invariant if and only if $h$ is a BF.
\end{proposition}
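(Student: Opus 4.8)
The plan is as follows. One direction is free: ``$h$ is a BF $\Rightarrow$ $\Cc$ forward invariant'' is precisely Lemma~\ref{lem:BF-invariance}, whose hypotheses ($\bF$ locally Lipschitz, $\frac{\partial h}{\partial \bx}(\bx)\neq\bzero$ on $\partial\Cc$) are assumed here. So the work is the converse: assuming $\Cc$ is forward invariant (and nonempty, the empty case being trivial), I must construct some $\alpha\in\Kc$ with $L_\bF h(\bx)\geq-\alpha(h(\bx))$ for all $\bx\in\Cc$. Two facts are then available for free. First, Lemma~\ref{lem:not-nagumo} applies and gives the Nagumo-type conclusion $L_\bF h(\bx)\geq 0$ whenever $h(\bx)=0$. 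Second, compactness of the superlevel sets makes $\Cc$ compact, makes $\bar R\defeq\max_{\bx\in\Cc}h(\bx)$ finite, and makes each ``sublevel slab'' $\Cc_v\defeq\setdefb{\bx\in\Cc}{h(\bx)\leq v}$ compact, with $\Cc_0=\partial\Cc$.

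The construction I would use reads $\alpha$ off the worst-case decay of $h$ along slabs. Define $\map{\delta}{\realnonneg}{\real}$ by $\delta(v)\defeq\max_{\bx\in\Cc_v}\big(-L_\bF h(\bx)\big)$: this is finite because $-L_\bF h$ is continuous on the compact set $\Cc_v$, and non-decreasing because the family $\{\Cc_v\}$ is nested in $v$. Nagumo gives $\delta(0)=-\min_{\bx\in\partial\Cc}L_\bF h(\bx)\leq 0$. Put $\hat\delta\defeq\relu\circ\delta$, so $\hat\delta$ is non-decreasing, bounded, $\hat\delta(0)=0$, and $-L_\bF h(\bx)\leq\hat\delta(h(\bx))$ for every $\bx\in\Cc$ (using $\bx\in\Cc_{h(\bx)}$). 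The one delicate step is to show $\hat\delta$ is continuous at $0$: since $\Cc_v\downarrow\bigcap_{v>0}\Cc_v=\partial\Cc$ as $v\downarrow 0$, a standard nested-compact-sets argument — take maximizers $\bx_v\in\Cc_v$, extract a subsequence $\bx_{v_k}\to\bx^{*}\in\partial\Cc$, and pass to the limit by continuity of $L_\bF h$ — yields $\delta(v)\to\delta(0)\leq 0$, hence $\hat\delta(v)\to 0=\hat\delta(0)$.

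Next I would majorize the monotone function $\hat\delta$ by a class-$\Kc$ function. Extend $\hat\delta$ beyond $\bar R$ by the constant $\hat\delta(\bar R)$ (still non-decreasing and bounded), set $\alpha(0)\defeq 0$, and for $v>0$ set
\begin{equation*}
  \alpha(v)\;\defeq\; v+\frac{1}{v}\int_{v}^{2v}\hat\delta(s)\,\mathrm{d}s \;=\; v+\int_{1}^{2}\hat\delta(vt)\,\mathrm{d}t .
\end{equation*}
Then $\alpha(0)=0$; $\alpha$ is continuous on $(0,\infty)$ (the parametric integral of a bounded monotone function depends continuously on $v$) and continuous at $0$ because $0\leq\int_1^2\hat\delta(vt)\,\mathrm{d}t\leq\hat\delta(2v)\to 0$; $\alpha$ is strictly increasing because $v\mapsto v$ is strictly increasing while $v\mapsto\int_1^2\hat\delta(vt)\,\mathrm{d}t$ is non-decreasing (each integrand $\hat\delta(vt)$ is non-decreasing in $v$); and $\alpha(v)\geq\int_1^2\hat\delta(vt)\,\mathrm{d}t\geq\hat\delta(v)$ since $\hat\delta(vt)\geq\hat\delta(v)$ for $t\geq 1$. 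Hence $\alpha\in\Kc$, and for every $\bx\in\Cc$, $L_\bF h(\bx)\geq-\hat\delta(h(\bx))\geq-\alpha(h(\bx))$, which is exactly~\eqref{eq:BC}; so $h$ is a BF.

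The main obstacle, I expect, is precisely the passage from the raw quantity $\delta$ — a priori only a possibly discontinuous, possibly $+\infty$ monotone function of the level value — to an honest class-$\Kc$ bound: one must first get finiteness of $\delta$, then get $\hat\delta$ to vanish \emph{continuously} at the origin, and only then can the averaging trick smooth things out while preserving strict monotonicity. Compactness of the superlevel sets is exactly what powers finiteness (extreme value theorem) and the continuity-at-$0$ limit (nested compacts); the rest is bookkeeping. I would also dispatch the degenerate cases $\Cc=\emptyset$ and $\partial\Cc=\emptyset$ in passing, both being trivial.
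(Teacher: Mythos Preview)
Your argument is correct, but note that the paper does not actually supply a proof of this proposition: it is stated with attribution to \cite{AmesTAC17} and left unproved, so there is no in-paper argument to compare against. Your construction --- invoking Lemma~\ref{lem:not-nagumo} for the boundary inequality, defining the worst-case decay $\delta(v)=\max_{\Cc_v}(-L_\bF h)$ over the compact slabs $\Cc_v$, using nested compactness to get right-continuity of $\hat\delta=\relu\circ\delta$ at $0$, and then majorizing $\hat\delta$ by the class-$\Kc$ function $\alpha(v)=v+\int_1^2\hat\delta(vt)\,\mathrm dt$ --- is sound in every step and is in the spirit of the standard converse-BF arguments in the literature.
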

The latter condition~\eqref{eq:nagumo} is often more useful to verify safety, whereas the BF framework becomes more relevant in the context of control systems, which we discuss next.

\textbf{CBFs:}
Consider the control affine system:
\begin{equation} \label{sys:ctrl_affine}
    \dot \bx = \bf(\bx) + \bg(\bx)\bu
\end{equation}
with state ${\bx\in\real^n}$, control input ${\bu\in\real^m}$, locally Lipschitz drift ${\map{\bf}{\real^n}{\real^n}}$ and control matrix ${\map{\bg}{\real^n}{\real^{n\times m}}}$. We address the \textit{safety-critical control} problem, where the goal is to keep system trajectories within the safety constraint set, i.e., ensure ${\bx(t)\in\Sc}$ for all time. Here, rather than forward invariant sets, we are interested in control invariant sets. 
\begin{definition}\longthmtitle{Safe Set}
    A set $\Cc\subset \real^n$ is a \textbf{control invariant set} for system~\eqref{sys:ctrl_affine} if, for any initial condition $\bx_0\in\Cc$, there exists a control signal $t\mapsto \bu(t)$ such that the resulting state trajectory $\bx(t)$ is contained within $\Cc$, i.e., $\bx(t)\in\Cc$ for all time $t\geq 0$. The set $\Cc$ is a \textbf{safe set} for~\eqref{sys:ctrl_affine} if it is also a subset of the safety constraint set, i.e., $\Cc\subseteq\Sc$.
\end{definition}

CBFs extend BFs to control systems and are a simple yet powerful tool for certifying that the set $\Cc$ is control invariant.

\begin{definition}\longthmtitle{Control Barrier Function~\cite{AmesTAC17}}
The function $h$ from~\eqref{eq:safeset} is a \textbf{control barrier function} (CBF) for~\eqref{sys:ctrl_affine} on $\Cc$ if there exists ${\alpha \in \Ke}$ such that for all $\bx\in\Cc$: 
\begin{equation}\label{eq:CBC}
\sup_{\bu\in\real^m} L_{\bf}h(\bx) + L_{\bg}h(\bx) \bu > -\alpha(h(\bx)).
\end{equation}
\end{definition}

CBFs help establish controllers ${\map{\bk}{\real^n}{\real^m}}$, ${\bu = \bk(\bx)}$, that render $\Cc$ forward invariant for the closed-loop system: 
\begin{equation} \label{sys:closedloop}
    \dot \bx = \bf(\bx) + \bg(\bx)\bk(\bx) \triangleq \bF(\bx).
\end{equation}

\begin{theorem}[\!\!\cite{AmesTAC17}] \label{theo:CBF}
If $h$ is a CBF for~\eqref{sys:ctrl_affine} on $\Cc$, then any locally Lipschitz controller ${\map{\bk}{\real^n}{\real^m}}$ that satisfies: 
\begin{equation}
    L_{\bf}h(\bx) +L_{\bg}h(\bx) \bk(\bx) \geq - \alpha(h(\bx)),
\label{eq:safety_condition}
\end{equation}
for all ${\bx \in \Cc}$, renders the set $\Cc$ forward invariant for~\eqref{sys:closedloop}, i.e., $\bx(t)\in\Cc$ for all $t\geq 0$ if $\bx_0\in\Cc$.
\end{theorem}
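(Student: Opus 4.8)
The plan is to deduce the claim from the autonomous barrier-function result, Lemma~\ref{lem:BF-invariance}, applied to the closed-loop system~\eqref{sys:closedloop}. The first step is to check that substituting $\bu = \bk(\bx)$ into~\eqref{sys:ctrl_affine} produces an autonomous system with a \emph{locally Lipschitz} vector field: since $\bf$ and $\bg$ are locally Lipschitz by assumption and $\bk$ is locally Lipschitz by hypothesis, on any compact set the map $\bx \mapsto \bg(\bx)\bk(\bx)$ is a product of locally Lipschitz, locally bounded maps and hence locally Lipschitz, so $\bF = \bf + \bg\bk$ in~\eqref{sys:closedloop} is locally Lipschitz. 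This guarantees existence and uniqueness of the maximal solution $\bx(t)$ through any $\bx_0 \in \Cc$.

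The second step is to recognize the safety condition~\eqref{eq:safety_condition} as precisely the barrier condition~\eqref{eq:BC} for the closed-loop field: for every $\bx \in \Cc$,
\begin{equation*}
L_\bF h(\bx) = L_\bf h(\bx) + L_\bg h(\bx)\bk(\bx) \ge -\alpha(h(\bx)),
\end{equation*}
and since $h(\bx) \ge 0$ on $\Cc$ we only ever evaluate the extended class-$\Kc$ function $\alpha$ on $\realnonneg$, where its restriction is an ordinary class-$\Kc$ function as required by the BF definition. Hence $h$ is a BF for~\eqref{sys:closedloop} on $\Cc$. Invoking the regularity of $h$ on $\partial\Cc$ (i.e., $\tfrac{\partial h}{\partial \bx}(\bx) \ne \bzero$ whenever $h(\bx) = 0$, the working hypothesis under which $\partial\Cc$ in~\eqref{eq:safeset} is the genuine boundary of $\Cc$), Lemma~\ref{lem:BF-invariance} then yields forward invariance of $\Cc$ for~\eqref{sys:closedloop}, which is exactly the conclusion.

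I expect the only real obstacle to be the interface between this theorem's hypotheses and those of Lemma~\ref{lem:BF-invariance}: the nonvanishing-gradient condition is not restated in the theorem, so one must either carry it as a standing assumption or argue around it. The cleanest self-contained route that avoids it is the comparison-lemma argument underlying Lemma~\ref{lem:BF-invariance}: along a solution set $\eta(t) = h(\bx(t))$, note $\dot\eta(t) = L_\bF h(\bx(t)) \ge -\alpha(\eta(t))$ \emph{as long as} $\bx(t) \in \Cc$, compare with the scalar solution $y$ of $\dot y = -\alpha(y)$, $y(0) = \eta(0) \ge 0$, which stays nonnegative because $\alpha(0) = 0$, and conclude $\eta(t) \ge y(t) \ge 0$ so the trajectory never leaves $\Cc$. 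The two points needing care there are that $\alpha \in \Ke$ need not be Lipschitz — so the comparison ODE may lack a unique solution, and one takes the maximal solution with $y \equiv 0$ as a lower solution — and that the differential inequality is only assumed on $\Cc$, which is handled by a first-exit-time argument: on the closed interval up to any putative first time $\eta$ would reach $0$ the inequality is valid, and the comparison forces $\eta$ to stay at or above $y \ge 0$, precluding exit.
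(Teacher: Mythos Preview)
Your approach is essentially the paper's: reduce to Lemma~\ref{lem:BF-invariance} applied to the closed-loop system~\eqref{sys:closedloop}, after checking that $\bF = \bf + \bg\bk$ is locally Lipschitz and that~\eqref{eq:safety_condition} is exactly the BF inequality~\eqref{eq:BC} for $\bF$.

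The one point you flag as an obstacle is in fact not one: the regular-value condition $\tfrac{\partial h}{\partial \bx}(\bx) \neq \bzero$ on $\partial\Cc$ is an automatic consequence of the \emph{strict} inequality in the CBF definition~\eqref{eq:CBC}, as the paper notes immediately after the theorem. Indeed, if $h(\bx)=0$ and $\tfrac{\partial h}{\partial \bx}(\bx)=\bzero$, then $L_\bf h(\bx)=0$ and $L_\bg h(\bx)=\bzero$, so the supremum in~\eqref{eq:CBC} equals $0$, contradicting $>-\alpha(0)=0$. With this observation the hypotheses of Lemma~\ref{lem:BF-invariance} are fully verified from the theorem's own assumptions, and there is no need to carry a standing regularity hypothesis or fall back on the comparison-lemma argument.
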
 
Theorem~\ref{theo:CBF} follows directly from applying Lemma~\ref{lem:BF-invariance} to the closed-loop system with the state feedback $\bu=\bk(\bx)$. Note, importantly, the subtle difference of the strict inequality in the condition~\eqref{eq:CBC} for CBFs and the nonstrict inequality in~\eqref{eq:BC} for BFs. The strictness serves two purposes. First, it guarantees that the regular value assumption, i.e., $\frac{\partial h}{\partial \bx}(\bx)\neq \bzero$ when $h(\bx)=0$, in Lemma~\ref{lem:BF-invariance} holds. More importantly, it ensures the construction of a Lipschitz continuous controller~$\bk$~\cite{MJ:18}. 

CBFs provide a constructive way to synthesize a safe controller ${\map{\bk}{\real^n}{\real^m}}$ through modifying a desired controller ${\map{\bk_\des}{\real^n}{\real^m}}$ using the quadratic program (QP):
\begin{equation} \label{eq:CBF-QP}
\begin{aligned}
    \bk(\bx)= \argmin_{\bu\in\real^m} & \quad \tfrac{1}{2} \|\bu-\bk_\des(\bx)\|_{\bGamma}^2 \\
    \text{s.t.} & \quad L_\bf h(\bx) +L_\bg h(\bx) \bu \geq -\alpha(h(\bx)),
\end{aligned}
\end{equation}
with a symmetric positive definite weight matrix ${\bGamma \in \real^{m 
\times m}}$. This CBF-QP can be solved explicitly in the form~\cite{CohenLCSS23, TGM-SKK-JC-KD-KLH-ADA:25}:
\begin{equation}
\begin{aligned} \label{eq:safetyfilter}
    \bk(\bx) & = \bk_\des(\bx) + \lambda \big( a(\bx), \|\bb(\bx)\|_{\bGamma} \big) \bb(\bx), \\
    a(\bx) & = L_{\bf}h(\bx) + L_{\bg}h(\bx) \bk_\des(\bx) + \alpha \big( h(\bx) \big), \\
    \bb(\bx) & = \bGamma^{-1} L_{\bg}h(\bx)^\top, \\
    \lambda(a,& b) =
    \begin{cases}
        0 & \mathrm{if}\ b = 0, \\ 
        \frac{\relu(-a)}{b^2} & \mathrm{otherwise}.
    \end{cases}
\end{aligned}
\end{equation}
Importantly, this controller is locally Lipschitz on a neighborhood of $\Cc$~\cite{MJ:18} because of the strictness in definition~\eqref{eq:CBC}.
\begin{corollary}\longthmtitle{Safety}
    If $h$ is a CBF for~\eqref{sys:ctrl_affine} on $\Cc$, then $\Cc$ is a control invariant set. Furthermore, it is a safe set if $\Cc\subseteq\Sc$.
\end{corollary}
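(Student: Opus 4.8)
The plan is to reduce the statement to Theorem~\ref{theo:CBF} by exhibiting a single admissible feedback. By the definition of a control invariant set, it suffices to produce one locally Lipschitz controller $\map{\bk}{\real^n}{\real^m}$ satisfying the pointwise inequality~\eqref{eq:safety_condition} for all $\bx\in\Cc$: indeed, Theorem~\ref{theo:CBF} then renders $\Cc$ forward invariant for the closed loop~\eqref{sys:closedloop}, so from any $\bx_0\in\Cc$ the open-loop signal $t\mapsto\bu(t)=\bk(\bx(t))$ keeps the corresponding trajectory of~\eqref{sys:ctrl_affine} inside $\Cc$ for all $t\geq 0$, which is exactly control invariance.

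For the controller I would take the CBF-QP~\eqref{eq:CBF-QP} with any fixed locally Lipschitz nominal input $\bk_\des$ (for instance $\bk_\des\equiv\bzero$), whose minimizer admits the closed form~\eqref{eq:safetyfilter}. Three short points need checking. First, feasibility: for every $\bx\in\Cc$ the strict CBF inequality~\eqref{eq:CBC} guarantees the affine constraint in~\eqref{eq:CBF-QP} defines a nonempty set --- if $L_{\bg}h(\bx)\neq\bzero$ it is a half-space, and if $L_{\bg}h(\bx)=\bzero$ then~\eqref{eq:CBC} forces $L_{\bf}h(\bx)>-\alpha(h(\bx))$, so the constraint holds for every $\bu$. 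Second, well-posedness: the objective is strictly convex, hence the minimizer is unique, the expression~\eqref{eq:safetyfilter} is well defined, and by construction $\bk$ satisfies~\eqref{eq:safety_condition} on $\Cc$. Third, regularity: as noted after Theorem~\ref{theo:CBF}, the strictness in~\eqref{eq:CBC} simultaneously yields the regular-value condition $\frac{\partial h}{\partial \bx}(\bx)\neq\bzero$ on $\partial\Cc$ needed by Lemma~\ref{lem:BF-invariance} and, via~\cite{MJ:18}, makes $\bk$ locally Lipschitz on a neighborhood of $\Cc$. With these, all hypotheses of Theorem~\ref{theo:CBF} hold.

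Applying Theorem~\ref{theo:CBF} to this $\bk$ shows $\Cc$ is forward invariant for~\eqref{sys:closedloop}, which by the reduction in the first paragraph makes $\Cc$ a control invariant set for~\eqref{sys:ctrl_affine}. The second claim is then immediate from the definition of a safe set: a control invariant set that also satisfies $\Cc\subseteq\Sc$ is, by definition, a safe set.

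The only genuine (and still mild) obstacle is the feasibility-plus-Lipschitz verification for the QP; everything else is bookkeeping around Theorem~\ref{theo:CBF}. I would dispatch it by leaning on the strictness of~\eqref{eq:CBC} and the regularity result of~\cite{MJ:18}, exactly as the surrounding discussion already anticipates.
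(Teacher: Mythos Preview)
Your proposal is correct and follows essentially the same approach the paper takes: the paper leaves this particular corollary unproved (it sits in the preliminaries as background), but the surrounding discussion makes the intended argument clear---use the CBF-QP~\eqref{eq:CBF-QP}, invoke the Lipschitz result of~\cite{MJ:18} guaranteed by the strict inequality in~\eqref{eq:CBC}, and apply Theorem~\ref{theo:CBF}. Your proof mirrors exactly the structure of the paper's explicit proof of the analogous OD-CBF corollary later on (feasibility of the QP, Lipschitz continuity of the resulting controller, forward invariance via the main theorem, then the safe-set conclusion by definition), so there is no meaningful difference in method.
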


To construct a function $h$ meeting the CBF condition~\eqref{eq:CBC}, it is often easier to verify that:
\begin{equation}\label{eq:CBC_check}
    L_\bg h(\bx) = \bzero \implies L_\bf h(\bx)>-\alpha(h(\bx)),
\end{equation} 
which is equivalent to~\eqref{eq:CBC}.
This condition states that $h$ needs only be analyzed where $L_\bg h(\bx) =\bzero$, reducing the set of states where one must verify the validity of a candidate CBF.

\section{Optimal-Decay CBFs} \label{sec:OD-CBF}

Verifying \eqref{eq:CBC_check} can be challenging as, even if $h$ is a CBF, finding an ${\alpha\in\Ke}$ satisfying \eqref{eq:CBC_check} is often not straightforward. In certain cases, $\alpha$ can be scaled (e.g., multiplied by a constant) to help satisfy the inequality, but because $\alpha$ is a function of $h$ rather than the state $\bx$, the correct $\alpha$ may not be found by scaling alone. To this end, a new development in the CBF framework, often used in practice to address this issue, is the \textit{optimal-decay} CBF-QP~\cite{ZengACC21}.

Building on the CBF-QP control formulation, the optimal-decay CBF-QP simultaneously solves for a scaling factor multiplying $\alpha$, denoted by $\theta$, along with the controller $\bk$: 
\begin{equation} \label{eq:OD-CBF-QP}
\begin{aligned}
    \begin{bmatrix}
        \bk(\bx) \\ \theta(\bx)
    \end{bmatrix}
    & = \argmin_{\substack{\bu\in\real^{m} \\ \omega\in\realnonneg}}
    \tfrac{1}{2}
    \|\bu-\bk_\des(\bx)\|_{\bGamma}^2 +
    \tfrac{1}{2}
    p (\omega-\theta_\des)^2 \\
    &\qquad \text{s.t.}\quad L_\bf h(\bx) +L_\bg h(\bx) \bu \geq -\omega\alpha(h(\bx)), \\
    &\qquad \qquad ~ \omega \geq \theta_\des,
\end{aligned}
\end{equation}
where ${\theta_\des \geq 0}$ is the desired value of $\theta(\bx)$, whereas the symmetric positive definite matrix ${\bGamma \in \real^{m \times m}}$ and ${p>0}$ are weights.
In effect, this control formulation enforces:
\begin{equation}\label{eq:OD-CBF_enforces}
L_\bf h(\bx) +L_\bg h(\bx) \bk(\bx) \geq -\theta(\bx)\alpha(h(\bx)),
\end{equation}
for a decay rate ${\omega = \theta(\bx)}$ given by ${\map{\theta}{\R^n}{\realnonneg}}$,  
which still guarantees the forward invariance of $\Cc$ as will be shown shortly. Although the optimal-decay CBF-QP is a common technique used in practice~\cite{ZengACC21}, formal results regarding its properties are relatively sparse.
Here, we fill this gap by providing a rigorous analysis of the control formulation. 

\subsection{Optimal-Decay BFs}

We begin with the formalization of optimal-decay BFs.
\begin{definition}\longthmtitle{Optimal-Decay BF}
The function $h$ from~\eqref{eq:safeset} is an \textbf{optimal-decay barrier function} (OD-BF) for~\eqref{sys:closedloop} on~$\Cc$ if there exists ${\alpha \in \Kc}$ and a function $\map{\theta}{\real^n}{\realnonneg}$ such that~\eqref{eq:OD-CBF_enforces} holds
for all $\bx\in\mathcal{C}$.
\end{definition}

\begin{theorem}\label{thm:converse_OD-BF} Let the system dynamics $\bF$ for \eqref{sys:closedloop} be locally Lipschitz and suppose that ${\frac{\partial h}{\partial\bx}(\bx)\neq \bzero}$ when ${h(\bx)=0}$. Then, $\Cc$ is forward invariant if and only if $h$ is an OD-BF.
\end{theorem}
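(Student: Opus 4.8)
The plan is to reduce both implications to the Nagumo-type characterization in Lemma~\ref{lem:not-nagumo}, exploiting two facts: that $\alpha(0)=0$ for any $\alpha\in\Kc$, and that in the definition of an OD-BF the multiplier $\theta$ is allowed to be an \emph{arbitrary} state-dependent nonnegative function, with no continuity or growth requirement.

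For the ``if'' direction, suppose $h$ is an OD-BF with data $\alpha\in\Kc$ and $\map{\theta}{\real^n}{\realnonneg}$, so that $L_\bF h(\bx)\geq -\theta(\bx)\alpha(h(\bx))$ for all $\bx\in\Cc$. Evaluating this at any $\bx\in\partial\Cc$, i.e.\ with $h(\bx)=0$, gives $L_\bF h(\bx)\geq -\theta(\bx)\alpha(0)=0$. This is precisely condition~\eqref{eq:nagumo}, and since $\bF$ is locally Lipschitz and $\frac{\partial h}{\partial\bx}(\bx)\neq\bzero$ on $\{h=0\}$, Lemma~\ref{lem:not-nagumo} yields forward invariance of $\Cc$.

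For the ``only if'' direction, assume $\Cc$ is forward invariant. By the necessity part of Lemma~\ref{lem:not-nagumo}, $h(\bx)=0\implies L_\bF h(\bx)\geq 0$. We now exhibit explicit data. Take $\alpha(s)=s$, which is of class $\Kc$, and define $\map{\theta}{\real^n}{\realnonneg}$ by $\theta(\bx)=\relu(-L_\bF h(\bx))/h(\bx)$ when $h(\bx)>0$ and $\theta(\bx)=0$ otherwise (the values off $\interior\Cc$ are irrelevant to~\eqref{eq:OD-CBF_enforces} and may be set to $0$). This is well defined and nonnegative, since $h(\bx)>0$ on $\interior\Cc$ and $\relu(\cdot)\geq 0$. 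On $\interior\Cc$: if $L_\bF h(\bx)\geq 0$ then $-\theta(\bx)\alpha(h(\bx))=0\leq L_\bF h(\bx)$, while if $L_\bF h(\bx)<0$ then $-\theta(\bx)h(\bx)=L_\bF h(\bx)$, so~\eqref{eq:OD-CBF_enforces} holds (with equality). On $\partial\Cc$: $-\theta(\bx)\alpha(h(\bx))=0\leq L_\bF h(\bx)$ by the Nagumo condition. Hence~\eqref{eq:OD-CBF_enforces} holds for all $\bx\in\Cc$, so $h$ is an OD-BF.

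The only real subtlety -- and the reason this statement is strictly stronger than the converse BF result (Proposition~\ref{prop:converse_BF}), which needs compact superlevel sets -- is conceptual: because $\theta$ may depend on $\bx$ and need not be continuous, one can absorb the exact pointwise ratio $-L_\bF h(\bx)/h(\bx)$ in the interior and rely on $\alpha(0)=0$ on the boundary, so no compactness or properness of $h$ enters. Minor care is needed only to check that $\theta$ is a genuine function on all of $\real^n$ (extend by $0$) and that the identity qualifies as a class-$\Kc$ function.
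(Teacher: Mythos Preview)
Your proof is correct and follows essentially the same approach as the paper: both directions are reduced to the Nagumo-type condition in Lemma~\ref{lem:not-nagumo}, and for the ``only if'' direction the decay-rate multiplier is constructed as $\theta(\bx)=\relu(-L_\bF h(\bx))/\alpha(h(\bx))$ on $\interior\Cc$ and $0$ on $\partial\Cc$. The only cosmetic difference is that the paper allows any $\alpha\in\Kc$ in this construction while you specialize to $\alpha(s)=s$, which changes nothing substantive.
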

\begin{proof}
    We will establish that~\eqref{eq:OD-CBF_enforces} is equivalent to~\eqref{eq:nagumo}. If~\eqref{eq:OD-CBF_enforces} holds, then we can directly verify~\eqref{eq:nagumo} with the direct substitution of $h(\bx)=0$. On the other hand, if~\eqref{eq:nagumo} holds, we may pick any class-$\Kc$ function $\alpha$ and define:
    $$
    \theta(\bx) = \begin{cases} 
    \frac{\relu(-L_\bF h(\bx))}{\alpha(h(\bx))} & h(\bx)\neq 0, \\
    0 & h(\bx)=0.
    \end{cases}
    $$
    This function is nonnegative for all ${\bx \in \Cc}$. In addition, when $h(\bx)=0$, \eqref{eq:nagumo} implies~\eqref{eq:OD-CBF_enforces}. When $h(\bx)\neq 0$, we have:
    $$
    L_\bF h(\bx) \geq -\relu(-L_\bF h(\bx)) = -\theta(\bx)\alpha(h(\bx)) 
    $$
    as desired. 
    With the equivalence between \eqref{eq:OD-CBF_enforces} and \eqref{eq:nagumo}, the result in the theorem is then precisely as in Lemma~\ref{lem:not-nagumo}.
\end{proof}
Theorem~\ref{thm:converse_OD-BF} shows that OD-BFs have a stronger connection to set forward invariance than standard BFs do. Without the the need to impose additional assumption about superlevel set compactness, cf. Proposition~\ref{prop:converse_BF}, OD-BFs close the gap between the BF framework and Lemma~\ref{lem:not-nagumo}. This foundation adds more flexibility in rendering a set forward invariant, which we will demonstrate in the context of control systems. 

\subsection{Optimal-Decay CBFs}
Building on the development of OD-BFs, we formally define optimal decay CBFs as follows.

\begin{definition}\longthmtitle{Optimal-Decay CBF}
The function $h$ from \eqref{eq:safeset} is an \textbf{optimal-decay control barrier function} (OD-CBF) for~\eqref{sys:ctrl_affine} on $\Cc$ if there exists ${\alpha \in \Ke}$ such that for all ${\bx \in \Cc}$:
\begin{equation}\label{eq:OD-CBC}
\sup_{\substack{\bu\in\real^{m},~\omega\in\realnonneg}} L_{\bf}h(\bx) +L_{\bg}h(\bx) \bu +\omega\alpha(h(\bx)) > 0.
\end{equation}
\end{definition}

The first result we establish is analogous to the condition in \eqref{eq:CBC_check}, which provides an alternative, but equivalent, characterization of OD-CBFs.
\begin{lemma}\label{lem:OD-CBC_check}
    The function $h$ from \eqref{eq:safeset} is an optimal-decay CBF for~\eqref{sys:ctrl_affine} on $\Cc$ if and only if:
    \begin{equation}\label{eq:OD-CBC_check}
        h(\bx) = 0\ \wedge\ L_{\bg}h(\bx)=\bzero \implies L_{\bf}h(\bx) > 0.
    \end{equation}
\end{lemma}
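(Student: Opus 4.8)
The plan is to prove both implications by a case analysis on a point $\bx \in \Cc$ — recalling that $h(\bx) \ge 0$ there — distinguishing according to whether $L_{\bg}h(\bx)$ vanishes and, when it does, whether $h(\bx)$ is strictly positive or zero. The crucial structural observation is that the decision variables $\bu \in \real^m$ and $\omega \in \realnonneg$ in \eqref{eq:OD-CBC} range over unbounded sets, so the supremum there is either $+\infty$ (when some coefficient multiplying an unbounded variable is nonzero) or collapses to $L_{\bf}h(\bx)$ (when all such coefficients vanish).

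For the direction $(\Leftarrow)$, I would assume \eqref{eq:OD-CBC_check} and fix any $\alpha \in \Ke$ (for concreteness, $\alpha = \mathrm{id}$), then verify \eqref{eq:OD-CBC} at every $\bx \in \Cc$. If $L_{\bg}h(\bx) \neq \bzero$, take $\bu = t\, L_{\bg}h(\bx)^\top$ and let $t \to +\infty$: then $L_{\bg}h(\bx)\bu = t\,\|L_{\bg}h(\bx)\|^2 \to +\infty$ while the remaining terms stay finite (choose $\omega = 0$), so the supremum is $+\infty > 0$. If $L_{\bg}h(\bx) = \bzero$ but $h(\bx) > 0$, then $\alpha(h(\bx)) > \alpha(0) = 0$ by strict monotonicity of $\alpha$, so letting $\omega \to +\infty$ again gives supremum $+\infty > 0$. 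If $L_{\bg}h(\bx) = \bzero$ and $h(\bx) = 0$, then $L_{\bg}h(\bx)\bu = 0$ and $\omega\alpha(h(\bx)) = \omega\alpha(0) = 0$ for every admissible $(\bu,\omega)$, so the supremum equals $L_{\bf}h(\bx)$, which is strictly positive by \eqref{eq:OD-CBC_check}.

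For $(\Rightarrow)$, I would assume $h$ is an OD-CBF, so \eqref{eq:OD-CBC} holds for some $\alpha \in \Ke$ and all $\bx \in \Cc$, and take an arbitrary $\bx \in \Cc$ with $h(\bx) = 0$ and $L_{\bg}h(\bx) = \bzero$. Substituting into \eqref{eq:OD-CBC}, the term $L_{\bg}h(\bx)\bu$ vanishes for all $\bu$, and $\omega\alpha(h(\bx)) = \omega\alpha(0) = 0$ for all $\omega \in \realnonneg$, so the supremum is exactly $L_{\bf}h(\bx)$; hence \eqref{eq:OD-CBC} forces $L_{\bf}h(\bx) > 0$, which is precisely \eqref{eq:OD-CBC_check}.

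I do not anticipate a genuine obstacle: the argument is a routine case split. The only points requiring care are making the "supremum equals $+\infty$" steps rigorous by exhibiting explicit sequences of $(\bu,\omega)$, and observing that $h(\bx) < 0$ never occurs on $\Cc$ so no further case is needed. It is worth remarking, as a byproduct of this proof, that the validity of a candidate OD-CBF is in fact \emph{independent} of the particular $\alpha \in \Ke$ chosen — the "there exists $\alpha$" quantifier is vacuous — which is exactly the feature that makes OD-CBFs easier to certify than ordinary CBFs.
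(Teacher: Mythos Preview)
Your proposal is correct and follows essentially the same approach as the paper: both arguments compute the supremum in~\eqref{eq:OD-CBC} by a case split, showing it equals $+\infty$ unless $h(\bx)=0$ and $L_{\bg}h(\bx)=\bzero$, in which case it collapses to $L_{\bf}h(\bx)$. The paper packages this more tersely by writing the supremum as a piecewise formula and then reading off the equivalence, whereas you spell out the directions and the explicit sequences for $(\bu,\omega)$, but the substance is identical.
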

\begin{proof}
    We define the shorthand notation:
    \begin{equation*}
        \rho(\bx,\bu,\omega) \coloneqq L_{\bf}h(\bx) + L_{\bg}h(\bx)\bu + \omega\alpha(h(\bx))
    \end{equation*}
    and then note that for all $\bx\in\mathcal{C}$:
    \begin{equation*} 
        \sup_{\substack{\bu\in\real^{m} \\ \omega\in\realnonneg}}\rho(\bx,\bu,\omega) = \begin{cases}
            L_{\bf}h(\bx) & \mathrm{if}\ h(\bx) = 0 \ \wedge \\
            & \quad L_{\bg}h(\bx)=\bzero, \\
            \infty & \mathrm{otherwise}.
        \end{cases}
    \end{equation*}
    Using this equation, we observe that
    \eqref{eq:OD-CBC_check} is equivalent to:
    $ 
        \sup_{\substack{\bu\in\real^{m}, \omega\in\realnonneg}}\rho(\bx,\bu,\omega) > 0,
    $
    which is precisely~\eqref{eq:OD-CBC}. 
\end{proof}

A main advantage of optimal-decay CBFs is that they can be verified by checking the condition~\eqref{eq:OD-CBC_check} which is equivalent to \eqref{eq:OD-CBC} (when there are no input bounds, i.e., ${\bu \in \real^m}$). 
This verification does not require searching for $\alpha$ and reduces the set of points where the validity of the CBF must be checked.
In addition, once the condition is verified, the resulting controller $\bk$ given in~\eqref{eq:OD-CBF-QP} is locally Lipschitz and admits a closed-form solution, which we establish next.
\begin{lemma} \label{lem:QPsolution}
    If $h$ is an OD-CBF for \eqref{sys:ctrl_affine} on $\Cc$,
    then the QP~\eqref{eq:OD-CBF-QP} is feasible for all ${\bx \in \Cc}$.
    The controller $\bk$ generated by~\eqref{eq:OD-CBF-QP} by can be expressed as:
    \begin{equation} \label{eq:OD-safetyfilter}
        \begin{aligned}
            \bk(\bx) & = \bk_\des(\bx) + \phi\big(a(\bx), \|\bb(\bx)\|_{\bGamma}, c(\bx) \big) \bb(\bx), \\ 
            a(\bx) & = L_{\bf}h(\bx) + L_{\bg}h(\bx)\bk_\des(\bx) + \theta_\des\alpha(h(\bx)), \\
            \bb(\bx) & = \bGamma^{-1} L_{\bg}h(\bx)^\top, \quad
            c(\bx) = p^{-1} \alpha(h(\bx)), \\ 
            \phi(a,& b,c) = \begin{cases}
                0 & \mathrm{if}\ b=0 \ \wedge \ c \leq 0, \\ 
                \frac{\relu(-a)}{b^2 + p \relu(c)^2} & \mathrm{otherwise},
            \end{cases}
        \end{aligned}
    \end{equation}
    whereas the corresponding optimal decay rate is:
    \begin{equation} \label{eq:OD-decayrate}
    \begin{aligned}
        \theta(\bx) & = \theta_\des + \psi \big( a(\bx), \|\bb(\bx)\|_{\bGamma}, c(\bx) \big), \\
        \chi(a,& b,c) =
        \begin{cases}
            0 & \mathrm{if}\ b=0 \ \wedge \ c \leq 0, \\ 
            \frac{\relu(-a) \relu(c)}{b^2 + p c^2} & \mathrm{otherwise}.
        \end{cases}
    \end{aligned}
    \end{equation}
\end{lemma}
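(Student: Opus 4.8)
The plan is to reduce the QP~\eqref{eq:OD-CBF-QP} to a one-constraint strictly convex program and then read the minimizer off its KKT conditions. Introducing $\bv = \bu - \bk_\des(\bx)$ and $\eta = \omega - \theta_\des$, problem~\eqref{eq:OD-CBF-QP} becomes: minimize $\tfrac12\|\bv\|_{\bGamma}^2 + \tfrac12 p\,\eta^2$ over $\bv\in\real^m$ and $\eta\in\realnonneg$ subject to the single linear inequality $L_{\bg}h(\bx)\,\bv + \alpha(h(\bx))\,\eta \ge -a(\bx)$, with $a(\bx)$ as in the statement; I will also use $\alpha(h(\bx)) = p\,c(\bx)$ and $L_{\bg}h(\bx)\,\bb(\bx) = \|\bb(\bx)\|_{\bGamma}^2 =: b^2$.

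For feasibility I would invoke Lemma~\ref{lem:OD-CBC_check} and split $\bx\in\Cc$ into three cases. If $L_{\bg}h(\bx)\neq\bzero$, then a sufficiently large positive multiple of $\bb(\bx)$ for $\bv$ (with $\eta=0$) makes the left-hand side arbitrarily large. If $L_{\bg}h(\bx)=\bzero$ but $h(\bx)>0$, then $\alpha(h(\bx))>0$ and a sufficiently large $\eta\ge0$ works. If $L_{\bg}h(\bx)=\bzero$ and $h(\bx)=0$, then Lemma~\ref{lem:OD-CBC_check} gives $L_{\bf}h(\bx)>0$, hence $a(\bx)=L_{\bf}h(\bx)>0$ and $\bv=\bzero$, $\eta=0$ is already feasible. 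Since $h\ge0$ on $\Cc$, this covers every $\bx\in\Cc$.

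For the closed form, the reduced QP is strictly convex (because $\bGamma\succ0$ and $p>0$) with affine constraints, so it has a unique minimizer, which is its unique KKT point. With multipliers $\mu\ge0$ for the barrier inequality and $\nu\ge0$ for $\eta\ge0$, stationarity gives $\bv = \mu\,\bb(\bx)$ and $p\,\eta = \mu\,\alpha(h(\bx)) + \nu$, i.e.\ $\eta = \mu\,c(\bx) + \nu/p$. Since $c(\bx)\ge0$ on $\Cc$, complementary slackness $\nu\eta=0$ forces $\nu=0$ (otherwise $\mu c(\bx)+\nu/p>0$, a contradiction). If $a(\bx)\ge0$ the point $\mu=0$, $\bv=\bzero$, $\eta=0$ satisfies all KKT conditions; if $a(\bx)<0$, then $\mu=0$ would yield the infeasible point $\bv=\bzero$, $\eta=0$, so $\mu>0$ and the barrier constraint is active, and substituting $\bv=\mu\bb(\bx)$, $\eta=\mu c(\bx)$ into it gives the scalar equation $\mu\big(b^2 + p\,c(\bx)^2\big) = -a(\bx)$, hence $\mu = \relu(-a(\bx))/(b^2 + p\,c(\bx)^2)$. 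Both cases are captured by $\bv = \phi\big(a(\bx),\|\bb(\bx)\|_{\bGamma},c(\bx)\big)\bb(\bx)$ and $\eta = \chi\big(a(\bx),\|\bb(\bx)\|_{\bGamma},c(\bx)\big)$ once one notes $\relu(c(\bx)) = c(\bx)$ on $\Cc$; the degenerate branch ``$b=0\wedge c\le0$'' is, on $\Cc$, exactly $L_{\bg}h(\bx)=\bzero\wedge h(\bx)=0$, the third feasibility case, where the minimizer is the origin and $\phi,\chi$ vanish --- this branch merely excises the $0/0$ ambiguity. Undoing $\bv=\bu-\bk_\des(\bx)$ and $\eta=\omega-\theta_\des$ yields~\eqref{eq:OD-safetyfilter} and~\eqref{eq:OD-decayrate}.

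The main obstacle is the careful treatment of the degenerate points where $b=0$ or $c=0$: there the denominator $b^2+p\,c(\bx)^2$ may vanish and the constraint $\eta\ge0$ is weakly active, so one must confirm, as above, that its multiplier $\nu$ is nevertheless zero on $\Cc$ and that the numerator $\relu(-a(\bx))$ vanishes wherever the denominator does (by the OD-CBF property). This last fact is also what renders the closed-form expressions locally Lipschitz on a neighborhood of $\Cc$ (cf.~\cite{MJ:18}): away from those points $\phi$ and $\chi$ are ratios of locally Lipschitz functions with nonvanishing denominator, while near them $a(\bx)>0$ forces $\phi\equiv\chi\equiv0$. The remainder is a routine strictly-convex-QP computation.
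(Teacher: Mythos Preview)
Your proposal is correct and follows essentially the same approach as the paper: both introduce the shift $\bu\mapsto\bu-\bk_\des(\bx)$, $\omega\mapsto\omega-\theta_\des$, verify feasibility on $\Cc$ via the OD-CBF characterization~\eqref{eq:OD-CBC_check}, and read off the minimizer from the KKT conditions. Your case analysis is slightly more streamlined---you first argue the multiplier on $\eta\ge0$ vanishes on $\Cc$ (since $c(\bx)\ge0$ there) and then split on the sign of $a(\bx)$, whereas the paper enumerates all four multiplier sign patterns and afterwards collapses them; the only extra content in the paper's version is the branch $c(\bx)<0$, which lies outside $\Cc$ and is not needed for the lemma as stated.
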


\begin{proof}
    First, we notice that the first constraint in~\eqref{eq:OD-CBF-QP} can be represented using $a$, $\bb$, and $c$ in~\eqref{eq:OD-safetyfilter} as:
    \begin{equation*}
    \begin{aligned}
        & L_\bf h(\bx) + L_\bg h(\bx) \bu + \omega \alpha(h(\bx)) \\
        & \qquad = a(\bx) + \bb(\bx)^\top \bGamma \big( \bu - \bk_\des(\bx) \big) + c(\bx) p (\omega - \theta_\des).
    \end{aligned}
    \end{equation*}
    The QP~\eqref{eq:OD-CBF-QP} is feasible if and only if:
    \begin{equation} \label{eq:feasbility}
        \bb(\bx) = \bzero \; \wedge \; c(\bx) \leq 0 \implies a(\bx) \geq 0.
    \end{equation}
    Because the OD-CBF property~\eqref{eq:OD-CBC_check} is equivalent to ${\bb(\bx) = \bzero \; \wedge \; c(\bx) = 0 \implies a(\bx) > 0}$, feasibility holds for all ${\bx \in \Cc}$, i.e., when ${c(\bx) \geq 0}$.
    
    We solve the QP~\eqref{eq:OD-CBF-QP} using the Karush-Kuhn-Tucker (KKT) conditions from~\cite{SB-LV:09}:
    \begin{equation} \label{eq:KKT}
    \begin{aligned}
        a(\bx) + \bb(\bx)^\top \hat{\bk}(\bx) + c(\bx) \hat{\theta}(\bx) & \geq 0, \\
        p^{-1} \hat{\theta}(\bx) \geq 0,~
        \lambda_{\bu}(\bx) \geq 0,~ 
        \lambda_{\omega}(\bx) & \geq 0, \\
        \lambda_{\bu}(\bx) \big( a(\bx) + \bb(\bx)^\top \hat{\bk}(\bx) + c(\bx) \hat{\theta}(\bx) \big) & = 0, \\
        \lambda_{\omega}(\bx) p^{-1} \hat{\theta}(\bx) & = 0, \\
        \hat{\bk}(\bx) - \lambda_{\bu}(\bx) \bGamma \bb(\bx) & = \bzero, \\
        \hat{\theta}(\bx) - \lambda_{\bu}(\bx) p c(\bx) - \lambda_{\omega}(\bx) & = 0,
    \end{aligned}
    \end{equation}
    where:
    \begin{equation} \label{eq:OD-transformation}
        \hat{\bk}(\bx) = \bGamma \big( \bk(\bx) - \bk_\des(\bx) \big), \quad
        \hat{\theta}(\bx) = p (\theta(\bx) - \theta_\des).
    \end{equation}
    
    Here ${\map{\lambda_{\bu}}{\real^n}{\real}}$ and ${\map{\lambda_{\omega}}{\real^n}{\real}}$ are Lagrange multipliers.
    Based on their values, we distinguish four cases.
    Case 1: if ${\lambda_{\bu}(\bx) = 0}$, ${\lambda_{\omega}(\bx) = 0}$, the KKT conditions~\eqref{eq:KKT} give:
    \begin{equation*}
        \hat{\bk}(\bx) = \bzero, \quad
        \hat{\theta}(\bx) = 0, \quad
        a(\bx) \geq 0.
    \end{equation*}
    Case 2: if ${\lambda_{\bu}(\bx) > 0}$, ${\lambda_{\omega}(\bx) = 0}$, we obtain:
    \begin{equation*}
    \begin{aligned}
        \hat{\bk}(\bx) & = \frac{-a(\bx)}{\|\bb(\bx)\|_{\bGamma}^2 + p c(\bx)^2} \bGamma \bb(\bx), \quad
        & a(\bx) < 0, \\
        \hat{\theta}(\bx) & = \frac{-a(\bx)}{\|\bb(\bx)\|_{\bGamma}^2 + p c(\bx)^2} p c(\bx), \quad
        & c(\bx) \geq 0.
    \end{aligned}
    \end{equation*}
    Case 3: if ${\lambda_{\bu}(\bx) = 0}$, ${\lambda_{\omega}(\bx) > 0}$, the sixth and eighth lines of the KKT conditions~\eqref{eq:KKT} contradict, and this case cannot hold.
    Case 4: if ${\lambda_{\bu}(\bx) > 0}$, ${\lambda_{\omega}(\bx) > 0}$, then we get:
    \begin{equation*}
    \begin{aligned}
        \hat{\bk}(\bx) & = \frac{-a(\bx)}{\|\bb(\bx)\|_{\bGamma}^2} \bGamma \bb(\bx), \quad
        & a(\bx) < 0, \\
        \hat{\theta}(\bx) & = 0, \quad
        & c(\bx) < 0.
    \end{aligned}
    \end{equation*}

    Overall, these cases can be combined into:
    \begin{equation*}
        \hat{\bk}(\bx) =
        \begin{cases}
            \bzero & \mathrm{if}\ a(\bx) \geq 0, \\
            \frac{-a(\bx)}{\|\bb(\bx)\|_{\bGamma}^2 + p c(\bx)^2} \bGamma \bb(\bx) & \mathrm{if}\ a(\bx) < 0 \; \wedge \; c(\bx) \geq 0, \\
            \frac{-a(\bx)}{\|\bb(\bx)\|_{\bGamma}^2} \bGamma \bb(\bx) & \mathrm{if}\ a(\bx) < 0 \; \wedge \; c(\bx) < 0,
        \end{cases}
    \end{equation*}
    \begin{equation*}
        \hat{\theta}(\bx) =
        \begin{cases}
            0 & \mathrm{if}\ a(\bx) \geq 0, \\
            \frac{-a(\bx)}{\|\bb(\bx)\|_{\bGamma}^2 + p c(\bx)^2} p c(\bx) & \mathrm{if}\ a(\bx) < 0 \; \wedge \; c(\bx) \geq 0, \\
            0 & \mathrm{if}\ a(\bx) < 0 \; \wedge \; c(\bx) < 0.
        \end{cases}
    \end{equation*}
    Given the feasibility property~\eqref{eq:feasbility}, these can be written equivalently using $\phi$ in~\eqref{eq:OD-safetyfilter} and $\chi$ in~\eqref{eq:OD-decayrate} as:
    \begin{equation} \label{eq:OD-CBF-QP_transformed_solution}
    \begin{aligned}
        \hat{\bk}(\bx) & = \phi \big( a(\bx), \|\bb(\bx)\|_{\bGamma}, c(\bx) \big) \bGamma \bb(\bx), \\
        \hat{\theta}(\bx) & = \chi \big( a(\bx), \|\bb(\bx)\|_{\bGamma}, c(\bx) \big) p.
    \end{aligned}
    \end{equation}
    Substituting this into~\eqref{eq:OD-transformation} yields~\eqref{eq:OD-safetyfilter}-\eqref{eq:OD-decayrate}.
\end{proof}
\begin{remark}\label{rmk:denominator}
    Note the difference between the OD-CBF-QP controller~\eqref{eq:OD-safetyfilter} and the standard CBF-QP~\eqref{eq:safetyfilter}.
    The denominator ${\|\bb(\bx)\|_{\bGamma}^2}$ in~\eqref{eq:safetyfilter} is zero if and only if ${L_\bg h(\bx) = \bzero}$, whereas the denominator ${\|\bb(\bx)\|_{\bGamma}^2 + p \relu(c(\bx))^2}$ in~\eqref{eq:OD-safetyfilter} is zero for some ${\bx \in \Cc}$ if and only if ${L_\bg h(\bx) = \bzero}$ and ${h(\bx) = 0}$, which occurs for a smaller set of states. 
    Practically speaking, this leads to better numerical conditioning of the OD-CBF-QP, as demonstrated later in our numerical results.
\end{remark}
\begin{proposition}\label{prop:lipschitz}
  If $h$ is an OD-CBF for \eqref{sys:ctrl_affine} on $\Cc$,
  and if $\alpha$, $\frac{\partial h}{\partial \bx}$ and $\bk_\des$ are locally Lipschitz, then the controller in~\eqref{eq:OD-safetyfilter} is locally Lipschitz at each $\bx\in\Cc$.
\end{proposition}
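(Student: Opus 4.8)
The plan is to reason directly from the closed-form expression~\eqref{eq:OD-safetyfilter}, splitting $\Cc$ according to whether the denominator $D(\bx)\coloneqq\|\bb(\bx)\|_{\bGamma}^2+p\,\relu(c(\bx))^2$ is positive at the point under consideration. First I would record that all the ingredients are locally Lipschitz: $h$ is $C^1$ (hence locally Lipschitz), $\bf$ and $\bg$ are locally Lipschitz, and $\alpha$, $\frac{\partial h}{\partial\bx}$, $\bk_\des$ are locally Lipschitz by hypothesis, so $L_{\bf}h$, $L_{\bg}h$, $\alpha\circ h$, and therefore $a$, $\bb$, $c$, and $\|\bb\|_{\bGamma}$ are all locally Lipschitz; since $\relu$ is globally Lipschitz, $\relu(-a)$, $\relu(c)$, and hence $D$ are locally Lipschitz as well.

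Next I would fix $\bx_0\in\Cc$ and treat two exhaustive cases. In Case~(i), $D(\bx_0)>0$: by continuity of $D$ there is a neighborhood $V\ni\bx_0$ and $\delta>0$ with $D\ge\delta$ on $V$, and on $V$ the control reads $\bk(\bx)=\bk_\des(\bx)+\tfrac{\relu(-a(\bx))}{D(\bx)}\bb(\bx)$, a sum, product, and quotient of locally Lipschitz functions whose denominator is bounded below by $\delta$; hence $\bk$ is locally Lipschitz on $V$. In Case~(ii), $D(\bx_0)=0$: then $\bb(\bx_0)=\bzero$ and $\relu(c(\bx_0))=0$, and since $\bx_0\in\Cc$ gives $h(\bx_0)\ge 0$ while $\alpha\in\Ke$ is strictly increasing with $\alpha(0)=0$, these conditions force $L_{\bg}h(\bx_0)=\bzero$ and $h(\bx_0)=0$. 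Substituting into the definition of $a$ gives $a(\bx_0)=L_{\bf}h(\bx_0)+L_{\bg}h(\bx_0)\bk_\des(\bx_0)+\theta_\des\alpha(h(\bx_0))=L_{\bf}h(\bx_0)$, and Lemma~\ref{lem:OD-CBC_check} (the equivalent OD-CBF characterization) yields $L_{\bf}h(\bx_0)>0$, i.e. $a(\bx_0)>0$. By continuity of $a$ there is a neighborhood $U\ni\bx_0$ on which $a>0$, so $\relu(-a)\equiv 0$ on $U$, whence $\phi(a(\bx),\|\bb(\bx)\|_{\bGamma},c(\bx))\equiv 0$ and $\bk\equiv\bk_\des$ on $U$, which is locally Lipschitz.

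Since every $\bx_0\in\Cc$ falls into Case~(i) or Case~(ii) and in each we produce a neighborhood on which $\bk$ is Lipschitz, the conclusion follows. I expect Case~(ii) to be the main obstacle: the apparent singularity of the quotient $\relu(-a)/D$ as $D\to 0$ is removed not by a limiting cancellation but by the observation that $\relu(-a)$ vanishes \emph{identically} on a neighborhood of every zero of $D$ in $\Cc$; this in turn hinges on the strictness built into the OD-CBF definition (through Lemma~\ref{lem:OD-CBC_check}) together with $h\ge 0$ on $\Cc$, which upgrade $a(\bx_0)\ge 0$ to $a(\bx_0)>0$. The remaining steps --- making the continuity arguments precise and invoking the standard closure of locally Lipschitz maps under sums, products, and quotients with denominators bounded away from zero --- are routine.
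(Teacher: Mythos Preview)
Your proposal is correct and follows essentially the same approach as the paper's proof: both argue that $a$, $\bb$, $c$ (and hence $D$) are locally Lipschitz, handle the case $D(\bx_0)>0$ by the standard quotient rule, and for $D(\bx_0)=0$ use the OD-CBF characterization (Lemma~\ref{lem:OD-CBC_check}) to conclude $a(\bx_0)=L_{\bf}h(\bx_0)>0$, whence $\relu(-a)\equiv 0$ on a neighborhood and $\bk\equiv\bk_\des$ there. Your write-up is in fact a bit more explicit about the case split and about why $D(\bx_0)=0$ on $\Cc$ forces $h(\bx_0)=0$; the paper additionally includes a short preliminary paragraph verifying that the QP is feasible on an open ball around each boundary point, which you bypass by working directly with the closed-form expression~\eqref{eq:OD-safetyfilter}.
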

\begin{proof}
    First, we show that the optimal-decay CBF-QP is defined locally for each $\bx$ on the boundary of the set $\Cc$ where ${h(\bx)=0}$.
    If ${L_\bg h(\bx) \neq \bzero}$, then there exists an open ball centered at $\bx$ where ${L_\bg h(\bx) \neq \bzero}$ due to the continuity of $L_\bg h$.
    Therefore, condition~\eqref{eq:OD-CBC} holds locally.
    If ${L_\bg h(\bx) = \bzero}$, then condition~\eqref{eq:OD-CBC_check} suggests that ${L_\bf h(\bx) > 0}$ is satisfied, which also holds locally in an open ball around $\bx$ due to the continuity of $L_\bf h$.
    Thus, again, condition~\eqref{eq:OD-CBC} holds locally.
    Hence, the optimal-decay CBF-QP is defined locally in an open ball around any ${\bx \in \Cc}$.
    
    Considering the controller~\eqref{eq:OD-safetyfilter}, all the functions involved in its definition (i.e., $a$, $\bb$, $c$, and $\relu$) are locally Lipschitz.
    Thus, $\bk$ is locally Lipschitz where its denominator is nonzero, i.e., for all $\bx$ where 
       $ D(\bx) \triangleq \|\bb(\bx)\|_{\bGamma}^2 + p \relu(c(\bx))^2\neq 0.$  
    On the other hand, ${D(\bx) = 0}$ if and only if $L_\bg h(\bx) =\bzero$ and $h(\bx)\leq 0$. For $\bx\in\Cc$, this is true only when $L_\bg h(\bx) =\bzero$ and ${h(\bx) = 0}$, implying $L_\bf h(\bx)>0$ based on \eqref{eq:OD-CBC_check}. Therefore, 
    $
    -a(\bx) =
    -L_\bf h(\bx)<0,
    $
    for any ${\bx \in \Cc}$ satisfying ${D(\bx) = 0}$.
    Due to the continuity of $a$, ${-a(\bx)<0}$ also holds locally in an open ball around such state $\bx$.
    Thus, ${\relu(-a(\bx))=0}$ in~\eqref{eq:OD-safetyfilter}, making the function $\bk$ locally Lipschitz and concluding the proof. 
\end{proof}

Now, we show that any controller satisfying~\eqref{eq:OD-CBF_enforces} for a ${\theta(\bx) \geq 0}$ 
renders $\Cc$ forward invariant.

\begin{theorem} \label{thm:OD-CBF}
If $h$ is an OD-CBF for~\eqref{sys:ctrl_affine} on $\Cc$, then any locally Lipschitz controller ${\map{\bk}{\real^n}{\real^m}}$ that satisfies~\eqref{eq:OD-CBF_enforces} with some ${\map{\theta}{\real^n}{\realnonneg}}$,
for all ${\bx \in \Cc}$, renders the set $\Cc$ forward invariant for~\eqref{sys:closedloop}.
\end{theorem}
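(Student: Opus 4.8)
The plan is to reduce the claim to Theorem~\ref{thm:converse_OD-BF} applied to the closed-loop system~\eqref{sys:closedloop}. First I would check that the dynamics hypothesis of that theorem is met: since $\bk$ is locally Lipschitz and $\bf$, $\bg$ are locally Lipschitz, the closed-loop vector field $\bF = \bf + \bg\bk$ is locally Lipschitz.

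Next I would verify the regular value condition required by Theorem~\ref{thm:converse_OD-BF}, namely that $\frac{\partial h}{\partial \bx}(\bx) \neq \bzero$ whenever $h(\bx) = 0$. Suppose toward a contradiction that $h(\bx) = 0$ and $\frac{\partial h}{\partial \bx}(\bx) = \bzero$ for some $\bx\in\Cc$. Then $L_\bg h(\bx) = \frac{\partial h}{\partial \bx}(\bx)\bg(\bx) = \bzero$ and $L_\bf h(\bx) = \frac{\partial h}{\partial \bx}(\bx)\bf(\bx) = 0$, which contradicts the OD-CBF characterization~\eqref{eq:OD-CBC_check} of Lemma~\ref{lem:OD-CBC_check} (which forces $L_\bf h(\bx) > 0$ at such a point). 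Hence the regular value condition holds on $\partial\Cc$.

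Finally I would observe that the hypothesis that $\bk$ satisfies~\eqref{eq:OD-CBF_enforces} with some $\map{\theta}{\real^n}{\realnonneg}$ for all $\bx\in\Cc$ is precisely the statement that $h$ is an OD-BF for~\eqref{sys:closedloop} on $\Cc$: rewriting~\eqref{eq:OD-CBF_enforces} via $\bF = \bf + \bg\bk$ gives $L_\bF h(\bx) \geq -\theta(\bx)\alpha(h(\bx))$, and since on $\Cc$ we have $h(\bx)\geq 0$, only the values of $\alpha$ on $\realnonneg$ are relevant, where the extended class-$\Kc$ function $\alpha$ restricts to a class-$\Kc$ function. With the local Lipschitzness of $\bF$, the regular value property, and the OD-BF property all in hand, Theorem~\ref{thm:converse_OD-BF} immediately gives that $\Cc$ is forward invariant for~\eqref{sys:closedloop}.

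The only real obstacle is the bookkeeping in the second step --- confirming that the OD-CBF property forces $h$ to have $0$ as a regular value on $\partial\Cc$ --- together with the minor technicality of passing from $\alpha\in\Ke$ to its restriction in $\Kc$ on the domain where $h\geq 0$; everything else is a direct appeal to the already-established converse OD-BF theorem.
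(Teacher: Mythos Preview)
Your proof is correct and follows essentially the same approach as the paper's: establish the regular value condition on $\partial\Cc$ from the OD-CBF property, then conclude forward invariance from~\eqref{eq:OD-CBF_enforces}. The only cosmetic difference is that you route the final step through Theorem~\ref{thm:converse_OD-BF} while the paper invokes Lemma~\ref{lem:not-nagumo} directly---but since Theorem~\ref{thm:converse_OD-BF} is itself proved via Lemma~\ref{lem:not-nagumo}, the underlying argument is identical.
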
 

\begin{proof}
    Based on~\eqref{eq:OD-CBC}, when ${h(\bx)=0}$, we have 
        $\sup_{\bu\in\real^{m}} L_{\bf}h(\bx) +L_{\bg}h(\bx) \bu > 0$,
    which implies that $\frac{\partial h}{\partial \bx}(\bx)\neq \bzero$. 
    From~\eqref{eq:OD-CBF_enforces}, we may deduce that: 
    $$
    h(\bx) = 0 \implies L_\bf h(\bx) + L_\bg h(\bx) \bk(\bx) \geq 0,
    $$
    which implies the forward invariance of $\Cc$ by Lemma \ref{lem:not-nagumo}.
\end{proof}
Combining the previous results allows for establishing that $\Cc$ is control invariant.
\begin{corollary}
    If $h$ is an OD-CBF for~\eqref{sys:ctrl_affine} on $\Cc$, then $\Cc$ is a control invariant set. Furthermore,
    it is a safe set if $\Cc\subseteq\Sc$.
\end{corollary}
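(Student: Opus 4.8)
The plan is to chain together the results already established in this section rather than to prove anything from scratch. To show that $\Cc$ is control invariant, it suffices, by the definition of a control invariant set, to exhibit for each initial condition $\bx_0 \in \Cc$ a control signal $t \mapsto \bu(t)$ under which the resulting trajectory of~\eqref{sys:ctrl_affine} stays in $\Cc$ for all $t \geq 0$. I would produce such a signal in feedback form, $\bu(t) = \bk(\bx(t))$, using the OD-CBF-QP controller, and then the ``safe set'' clause is immediate from the definition once $\Cc \subseteq \Sc$.

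Concretely, I would proceed as follows. First, fix a convenient choice of the design data in~\eqref{eq:OD-CBF-QP}: take $\alpha \in \Ke$ locally Lipschitz (e.g.\ $\alpha = \mathrm{id}$), $\bk_\des \equiv \bzero$, $\theta_\des = 0$, and $\bGamma$, $p$ any admissible weights; this is harmless because, by Lemma~\ref{lem:OD-CBC_check}, the OD-CBF property is the $\alpha$-free condition~\eqref{eq:OD-CBC_check}, so $h$ remains an OD-CBF for this $\alpha$. Since $h$ is an OD-CBF on $\Cc$, Lemma~\ref{lem:QPsolution} gives that the QP~\eqref{eq:OD-CBF-QP} is feasible at every $\bx \in \Cc$ (indeed on an open neighborhood of $\Cc$, as in the proof of Proposition~\ref{prop:lipschitz}) and yields the closed-form controller $\bk$ in~\eqref{eq:OD-safetyfilter} together with the decay rate $\theta \geq \theta_\des \geq 0$ in~\eqref{eq:OD-decayrate}. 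By construction this pair satisfies the first constraint of~\eqref{eq:OD-CBF-QP}, i.e.\ inequality~\eqref{eq:OD-CBF_enforces} holds for all $\bx \in \Cc$ with the nonnegative $\theta$. Next, invoke Proposition~\ref{prop:lipschitz}: with $\alpha$ and $\bk_\des$ locally Lipschitz (and $\tfrac{\partial h}{\partial \bx}$ locally Lipschitz as part of the standing regularity), $\bk$ is locally Lipschitz at each $\bx \in \Cc$, so the closed-loop vector field $\bF = \bf + \bg\bk$ in~\eqref{sys:closedloop} is locally Lipschitz on a neighborhood of $\Cc$ and the closed-loop initial value problem admits a unique maximal solution from each $\bx_0 \in \Cc$. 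Applying Theorem~\ref{thm:OD-CBF} to this $\bk$ then renders $\Cc$ forward invariant for~\eqref{sys:closedloop}; hence $\bu(t) = \bk(\bx(t))$ keeps $\bx(t) \in \Cc$ for all $t \geq 0$, which is exactly control invariance of $\Cc$. Finally, if $\Cc \subseteq \Sc$, then the control invariant set $\Cc$ is a safe set directly by the definition of a safe set for~\eqref{sys:ctrl_affine}.

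The main obstacle I expect is not the logical chaining but making ``control invariant'' rigorous: both Theorem~\ref{thm:OD-CBF} and the definition of control invariance presuppose that a closed-loop (or at least well-defined) solution exists, which requires local Lipschitz continuity of $\bk$ and hence the hypotheses of Proposition~\ref{prop:lipschitz}. This is precisely why I would use the freedom to pick $\alpha$ and $\bk_\des$ locally Lipschitz in the QP, leaning on the standing smoothness of $h$. An alternative that avoids constructing a feedback law would be to argue directly from the equivalence in Lemma~\ref{lem:OD-CBC_check} together with a Nagumo-type selection argument to obtain an (open-loop) admissible signal, but the feedback route above is cleaner and simply reuses the lemmas already proved in this section.
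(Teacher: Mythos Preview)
Your proposal is correct and follows essentially the same route as the paper: feasibility of~\eqref{eq:OD-CBF-QP} via Lemma~\ref{lem:QPsolution}, local Lipschitz continuity via Proposition~\ref{prop:lipschitz}, forward invariance via Theorem~\ref{thm:OD-CBF}, and then the definition of a safe set. You are a bit more explicit than the paper about fixing the design data and invoking Lemma~\ref{lem:OD-CBC_check} to justify the freedom in choosing $\alpha$, but the logical skeleton is identical.
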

\begin{proof}
    Since~\eqref{eq:OD-CBC} holds for all ${\bx\in\Cc}$, the QP~\eqref{eq:OD-CBF-QP} is feasible by Lemma~\ref{lem:QPsolution} and it enforces \eqref{eq:OD-CBF_enforces} for all ${\bx\in\Cc}$ (considering any ${\theta_\des \geq 0}$ and any locally Lipschitz $\bk_d$). Moreover, this QP-based controller is locally Lipschitz continuous by Proposition \ref{prop:lipschitz}, and
    renders the set $\Cc$ forward invariant by Theorem \ref{thm:OD-CBF}.
    Thus, the state-feedback controller~\eqref{eq:OD-CBF-QP} produces a control signal $\bu(t)=\bk(\bx(t))$ that renders $\Cc$ forward invariant, and we conclude that $\Cc$ is a control invariant set.
    By definition, if $\Cc\subseteq\Sc$, then it is also a safe set.
\end{proof}

Moreover, apart from forward and control invariance, the stability of the set $\Cc$ can also be established for the system~\eqref{sys:closedloop} with the controller~\eqref{eq:OD-CBF-QP} under additional assumptions.
\begin{proposition}
    Let $h$ be an OD-CBF for \eqref{sys:ctrl_affine} on $\Cc$ and suppose there exists a set $\mathcal{E}\supset\mathcal{C}$ such that $L_{\bg}h(\bx)\neq\bzero$ for all $\bx\in\mathcal{E}\setminus\mathcal{C}$. Then, the QP \eqref{eq:OD-CBF-QP} is feasible for all $\bx\in\mathcal{E}$ with any $\alpha\in\Ke$ and, if $\Cc$ is compact and $\theta_{\rm{d}}>0$, the resulting controller renders $\Cc$ asymptotically stable for~\eqref{sys:closedloop}.
\end{proposition}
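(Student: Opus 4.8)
\textbf{Feasibility on $\mathcal{E}$.} The proof splits along the two assertions. For feasibility, I would reuse the characterization obtained inside the proof of Lemma~\ref{lem:QPsolution}: with $a$, $\bb$, $c$ as in~\eqref{eq:OD-safetyfilter}, the QP~\eqref{eq:OD-CBF-QP} is feasible at $\bx$ if and only if $\bb(\bx)=\bzero \wedge c(\bx)\le 0 \implies a(\bx)\ge 0$, cf.~\eqref{eq:feasbility}. On $\mathcal{E}\setminus\mathcal{C}$ the hypothesis gives $L_{\bg}h(\bx)\neq\bzero$, hence $\bb(\bx)=\bGamma^{-1}L_{\bg}h(\bx)^\top\neq\bzero$, and the implication is vacuously true. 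On $\mathcal{C}$ we have $h(\bx)\ge 0$, so $c(\bx)=p^{-1}\alpha(h(\bx))\ge 0$ for every $\alpha\in\Ke$; the antecedent can then hold only with $c(\bx)=0$, i.e., (as $\alpha$ is strictly increasing with $\alpha(0)=0$) with $h(\bx)=0$, and together with $\bb(\bx)=\bzero$, i.e., $L_{\bg}h(\bx)=\bzero$, the OD-CBF property~\eqref{eq:OD-CBC_check} gives $a(\bx)=L_{\bf}h(\bx)>0$. Nothing here uses $\alpha$ beyond $\alpha\in\Ke$, so feasibility holds on all of $\mathcal{E}$ for any $\alpha\in\Ke$.

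\textbf{A decay estimate off $\mathcal{C}$.} Since the QP is feasible on $\mathcal{E}$, its solution satisfies~\eqref{eq:OD-CBF_enforces} with $\omega=\theta(\bx)\ge\theta_\des$ (because $\theta=\theta_\des+\chi(\cdot)$ with $\chi\ge 0$ in~\eqref{eq:OD-decayrate}). On $\mathcal{E}\setminus\mathcal{C}$ one has $h(\bx)<0$, so $\alpha(h(\bx))<0$, and hence $\dot h=L_{\bf}h+L_{\bg}h\,\bk(\bx)\ge -\theta(\bx)\alpha(h(\bx))\ge -\theta_\des\alpha(h(\bx))>0$, using $\theta_\des>0$. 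I would package this via $V(\bx):=\relu(-h(\bx))$, which is locally Lipschitz, vanishes exactly on $\mathcal{C}$, is positive off $\mathcal{C}$, and along closed-loop trajectories obeys $\dot V\le-\tilde\gamma(V)$ on $\mathcal{E}$ with $\tilde\gamma(s):=-\theta_\des\alpha(-s)$; since $\alpha\in\Ke$ and $\theta_\des>0$, $\tilde\gamma$ is continuous with $\tilde\gamma(0)=0$ and strictly increasing, so $\tilde\gamma\in\Kc$. On any stretch of a trajectory where $h\ge0$ the state stays in $\mathcal{C}$ by forward invariance (Theorem~\ref{thm:OD-CBF} applied to the feasible QP controller), so $V\equiv0$ there; and since $h(\bx(t))$ is $C^1$ in $t$ while $\relu$ is Lipschitz, $V(\bx(t))$ is Lipschitz in $t$ and the bound $D^{+}V\le-\tilde\gamma(V)$ holds for all $t$ in the upper-Dini sense.

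\textbf{Localization and conclusion.} This is where compactness of $\mathcal{C}$ is used. I would take a compact neighborhood $\mathcal{N}$ with $\mathcal{C}\subseteq\interior\mathcal{N}\subseteq\mathcal{N}\subseteq\mathcal{E}$, put $v^{\star}:=\min_{\bx\in\partial\mathcal{N}}V(\bx)>0$, and consider the open neighborhood $\mathcal{D}:=\{\bx\in\interior\mathcal{N}:V(\bx)<v^{\star}\}$ of $\mathcal{C}$. Along a trajectory starting in $\mathcal{D}$, $V$ is nonincreasing (strictly decreasing while $V>0$), hence stays strictly below $v^{\star}$ and never reaches $\partial\mathcal{N}$; the trajectory therefore remains in $\mathcal{D}\subseteq\mathcal{N}$, is complete (confined to the compact set $\mathcal{N}$, with a closed-loop vector field that is locally Lipschitz on $\mathcal{E}$ by the argument of Proposition~\ref{prop:lipschitz}, since $L_{\bg}h\neq\bzero$ off $\mathcal{C}$ keeps the denominator in~\eqref{eq:OD-safetyfilter} positive there), and by the comparison lemma for $\dot V\le-\tilde\gamma(V)$ satisfies $V(\bx(t))\to0$; combined with boundedness of the trajectory this forces $\bx(t)\to\mathcal{C}$. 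Lyapunov stability of $\mathcal{C}$ follows from the same monotonicity of $V$, by restricting to a small enough sublevel set of $V$ inside $\mathcal{N}$. Together these yield asymptotic stability of $\mathcal{C}$.

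\textbf{Main obstacle.} Feasibility is essentially bookkeeping once~\eqref{eq:feasbility} is in hand; the delicate part is the localization. The naive route --- a global Lyapunov argument on sublevel sets of $V=\relu(-h)$ --- fails, because $\mathcal{C}$ being compact does \emph{not} imply that $\{h\ge-\epsilon\}$ is bounded, so those sublevel sets need not be compact nor trajectory-invariant. Working inside a compact neighborhood $\mathcal{N}$ and using its minimal boundary value $v^{\star}$ to confine trajectories is what restores compactness and forward invariance of the relevant sublevel sets; carefully handling the non-smoothness of $V$ on $\partial\mathcal{C}$ is the remaining technical point.
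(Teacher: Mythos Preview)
Your feasibility argument is essentially the paper's, just phrased through~\eqref{eq:feasbility}; the paper states it at the level of the OD-CBF condition~\eqref{eq:OD-CBC} and Lemma~\ref{lem:QPsolution}, which amounts to the same thing.

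For asymptotic stability you take a genuinely different route. The paper uses the \emph{smooth} Lyapunov candidate
\[
V(\bx)=\begin{cases}0 & \bx\in\mathcal{C},\\ h(\bx)^2 & \bx\in\mathcal{E}\setminus\mathcal{C},\end{cases}
\]
shows directly that $\dot V<0$ off $\mathcal{C}$ via~\eqref{eq:OD-CBF_enforces} with $\theta(\bx)\ge\theta_\des>0$, and then cites a set-stability Lyapunov theorem (Haddad--Chellaboina, Cor.~4.7) to conclude. You instead work with the nonsmooth $V=\relu(-h)$, extract a quantitative class-$\Kc$ decay $\dot V\le-\tilde\gamma(V)$ with $\tilde\gamma(s)=-\theta_\des\alpha(-s)$, and carry out the localization and comparison-lemma argument by hand. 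Your version is self-contained and delivers an explicit decay rate, at the cost of handling the Dini derivative across $\partial\mathcal{C}$ and the compact-neighborhood confinement carefully (as you correctly flag in your ``main obstacle''). The paper's version is shorter and sidesteps both the nonsmoothness and the localization bookkeeping by choosing a $C^1$ candidate and delegating to a cited result; the compactness of $\mathcal{C}$ is consumed by that citation rather than by an explicit $\mathcal{N}$ construction. Both are correct.
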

\begin{proof}
    Since $h$ is an OD-CBF, \eqref{eq:OD-CBF-QP} is feasible for all $\bx\in\mathcal{C}$. Further, since $L_{\bg}h(\bx)\neq\bzero$ for all $\bx\in\mathcal{E}\setminus\mathcal{C}$, the OD-CBF condition \eqref{eq:OD-CBC} holds on $\bx\in\mathcal{E}\setminus\mathcal{C}$, implying feasibility of \eqref{eq:OD-CBF-QP} for all  $\bx\in\mathcal{E}\setminus\mathcal{C}$. Together, this implies that \eqref{eq:OD-CBF-QP} is feasible for all $\bx\in\mathcal{E}$, guarantees that the resulting controller $\bk\,:\,\mathcal{E}\rightarrow\mathbb{R}^m$ satisfies \eqref{eq:OD-CBF_enforces} for all $\bx\in\mathcal{E}$, and renders $\Cc$ is forward invariant (Theorem \ref{thm:OD-CBF}). To show asymptotic stability, consider the Lyapunov function candidate:
    \begin{equation*}
        V(\bx) \coloneqq
        \begin{cases}
            0 & \text{if } \bx\in\mathcal{C} \\ 
            h(\bx)^2 & \text{if } \bx\in\mathcal{E}\setminus\mathcal{C},
        \end{cases}
    \end{equation*}
    which is continuously differentiable and satisfies ${V(\bx)=0\iff\bx\in\Cc}$ and ${V(\bx)>0}$ for ${\bx\in\mathcal{E}\setminus\mathcal{C}}$ so that $V$ is positive definite with respect to $\Cc$. The derivative of $V$ along the trajectories of the closed-loop system~\eqref{sys:closedloop} is:
    \begin{equation*}
        \dot{V}(\bx) =
        \begin{cases}
            0 & \text{if } \bx\in\mathcal{C}  \\ 
            2h(\bx)\big(L_{\bf}h(\bx) + L_{\bg}h(\bx)\bk(\bx) \big) & \text{if } \bx\in\mathcal{E}\setminus\mathcal{C}.
        \end{cases}
    \end{equation*}
    It then follows from \eqref{eq:OD-CBF-QP} that when $\bx\in\mathcal{E}\setminus\mathcal{C}$ we have:
    \begin{equation*}
        L_{\bf}h(\bx) + L_{\bg}h(\bx)\bk(\bx) \geq \underbrace{- \theta(\bx)}_{<0}\underbrace{\alpha (h(\bx))}_{<0} > 0,
    \end{equation*}
    which implies that when $\bx\in\mathcal{E}\setminus\mathcal{C}$ we also have:
    \begin{equation*}
        \underbrace{2h(\bx)}_{<0}\underbrace{\big(L_{\bf}h(\bx) + L_{\bg}h(\bx)\bk(\bx) \big)}_{>0} < 0.
    \end{equation*}
    Thus, $\dot{V}(\bx)=0\iff\bx\in\mathcal{C}$ and $\dot{V}(\bx) < 0$ for all $\bx\in\mathcal{E}\setminus\mathcal{C}$, showing that $\dot{V}$ is negative definite with respect to $\mathcal{C}$. Since $\Cc$ is a compact forward invariant set, $V$ is positive definite with respect to $\Cc$, and $\dot{V}$ is negative definite with respect to $\Cc$, it follows from Lyapunov's Theorem for set stability \cite[Corollary 4.7]{WMH-VC:08} that $\Cc$ is asymptotically stable.
\end{proof}

\section{Optimal-Decay CBF Variants}

Different variants of CBFs can also benefit from the proposed OD-CBF framework. Particularly, we will investigate how these results apply to high-order CBFs (HOCBFs)~\cite{QN-KS:16, WeiTAC22, TanTAC22} and rectified CBFs (ReCBFs)~\cite{PO-MHC-TGM-ADA:25-csl}. These methods are constructive approaches for generating CBF candidates that address safety constraints with higher relative degrees.

Given a safety constraint $\Sc$ in~\eqref{eq:safety_constraint}, the construction of CBFs typically involve the safety constraint function~$\psi$. For example, the first natural CBF candidate is the safety constraint function~$\psi$ itself. Under the new OD-CBF framework, we only need to check if~\eqref{eq:OD-CBC_check} holds. If $\psi$ is an OD-CBF, then $\Sc$ itself is a safe set for~\eqref{sys:ctrl_affine}. 

Nevertheless, an arbitrary constraint function $\psi$ may not be an OD-CBF, particularly when $\psi$ has higher relative degrees.
\begin{definition}\longthmtitle{Relative Degree}
A smooth function ${\map{\psi}{\real^n}{\real}}$ has relative degree $r\in\naturals$ for~\eqref{sys:ctrl_affine} if:
\begin{enumerate}
    \item $L_\bg L_\bf^i\psi(\bx) = \bzero$ for all $i \in \{1,\dots,r-2\}$ and ${\bx \in \real^n}$;
    \item $L_\bg L_\bf^{r-1}\psi(\bx)\neq \bzero$ for some $\bx\in\real^n$.
\end{enumerate}    
\end{definition}
If $\psi$ has relative degree ${r \geq 2}$, then ${L_\bg \psi(\bx) \equiv \bzero}$, and the OD-CBF condition~\eqref{eq:OD-CBC_check} for $\psi$ reduces to
${\psi(\bx) = 0 \implies L_\bf \psi(\bx) >0}$,
implying that $\Sc$ must already be forward invariant without any control. When $\psi$ has higher relative degrees, we cannot rely on $\psi$ being an OD-CBF, but we may use other methods to construct an OD-CBF based on $\psi$. For simplicity, we will limit ourselves to relative degree $r=2$ safety constraints in this paper.

\subsection{Optimal-Decay HOCBF}

First, we demonstrate the power of the optimal-decay CBF framework in the context of HOCBFs~\cite{QN-KS:16, WeiTAC22, TanTAC22}. 
For relative degree ${r=2}$, instead of directly using $\psi$, the HOCBF framework defines the following candidate CBF:
\begin{equation} \label{eq:HOCBF}
    h(\bx) = L_\bf \psi(\bx) + \alpha_1(\psi(\bx)),
\end{equation}
with some continuously differentiable class-$\Ke$ function $\alpha_1$. 

\begin{definition}\longthmtitle{Optimal-Decay HOCBF}
    The function $\psi$ from~\eqref{eq:safety_constraint} is an optimal-decay high-order control barrier function (OD-HOCBF) of degree ${r=2}$ for~\eqref{sys:ctrl_affine} on $\Sc \cap \Cc$ if it has relative degree $r=2$ and there exist $\alpha_1,\alpha\in\Ke$
    such that $h$ in \eqref{eq:HOCBF} satisfies for all $\bx\in\Sc\cap\Cc$ that:
    \begin{equation}\label{eq:OD-HOCBC}
        \sup_{\substack{\bu\in\real^{m},~\omega\in\realnonneg}} L_{\bf}h(\bx) +L_{\bg}h(\bx) \bu +\omega\alpha(h(\bx)) > 0.
    \end{equation}
\end{definition}
The OD-HOCBF condition~\eqref{eq:OD-HOCBC} for $\psi$ is similar to the OD-CBF condition~\eqref{eq:OD-CBC} for $h$. The key difference is that it only needs to hold on $\Sc \cap \Cc$, rather than the entire $\Cc$.
Next, we state a sufficient condition for $\psi$ to be an OD-HOCBF.

\begin{corollary}\label{coro:OD-HOCBF_simple_check}
Consider a constraint function~$\psi$ from~\eqref{eq:safety_constraint} with relative degree ${r=2}$ for~\eqref{sys:ctrl_affine}. If ${L_\bg L_\bf\psi(\bx)\neq \bzero}$ for each ${\bx\in\Sc\cap\Cc}$ satisfying ${h(\bx)=0}$, then $\psi$ is an OD-HOCBF of degree ${r=2}$ for~\eqref{sys:ctrl_affine} on $\Sc\cap\Cc$.
\end{corollary}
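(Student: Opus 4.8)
The plan is to reduce the OD-HOCBF condition \eqref{eq:OD-HOCBC} to a single pointwise implication, exactly as in Lemma~\ref{lem:OD-CBC_check}, and then to observe that the hypothesis makes the relevant set of states empty. Concretely, for a fixed $\alpha_1$ (hence fixed $h$ as in \eqref{eq:HOCBF}) and any $\alpha \in \Ke$, the supremum appearing in \eqref{eq:OD-HOCBC} equals $+\infty$ whenever $L_\bg h(\bx) \neq \bzero$ or $h(\bx) \neq 0$, and equals $L_\bf h(\bx)$ otherwise. Therefore verifying \eqref{eq:OD-HOCBC} on $\Sc \cap \Cc$ is equivalent to establishing
\[
\bx \in \Sc \cap \Cc,\ h(\bx) = 0,\ L_\bg h(\bx) = \bzero \implies L_\bf h(\bx) > 0 .
\]
This is just the criterion of Lemma~\ref{lem:OD-CBC_check} restricted from $\Cc$ to the set $\Sc\cap\Cc$; since that equivalence is argued pointwise, shrinking the set of states on which it must hold requires no new argument.

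Next I would compute $L_\bg h$. Differentiating $h = L_\bf \psi + \alpha_1 \circ \psi$ gives $\frac{\partial h}{\partial \bx}(\bx) = \frac{\partial (L_\bf \psi)}{\partial \bx}(\bx) + \alpha_1'(\psi(\bx))\,\frac{\partial \psi}{\partial \bx}(\bx)$, hence $L_\bg h(\bx) = L_\bg L_\bf \psi(\bx) + \alpha_1'(\psi(\bx))\, L_\bg \psi(\bx)$. Because $\psi$ has relative degree $r = 2$, we have $L_\bg \psi \equiv \bzero$, so in fact $L_\bg h(\bx) = L_\bg L_\bf \psi(\bx)$ at every $\bx$. (This step also uses that $h$ is continuously differentiable, which holds since $\psi$ is smooth and $\alpha_1$ is $C^1$.)

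The conclusion then follows immediately: by hypothesis, every $\bx \in \Sc \cap \Cc$ with $h(\bx) = 0$ satisfies $L_\bg L_\bf \psi(\bx) \neq \bzero$, i.e., $L_\bg h(\bx) \neq \bzero$. Consequently no $\bx \in \Sc \cap \Cc$ simultaneously satisfies $h(\bx) = 0$ and $L_\bg h(\bx) = \bzero$, so the displayed implication holds vacuously, and it does so for \emph{every} $\alpha \in \Ke$. Thus, with the $\alpha_1$ defining $h$ in \eqref{eq:HOCBF} together with any choice of $\alpha \in \Ke$, condition \eqref{eq:OD-HOCBC} holds on all of $\Sc \cap \Cc$, which is precisely the assertion that $\psi$ is an OD-HOCBF of degree $r = 2$ for \eqref{sys:ctrl_affine} on $\Sc\cap\Cc$.

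There is no substantial obstacle here; the proof is the bookkeeping of two facts — that relative degree $2$ forces $L_\bg \psi \equiv \bzero$, and that the OD-CBF supremum collapses as in Lemma~\ref{lem:OD-CBC_check}. The only point meriting a line of care is the transfer of that lemma's equivalence from the domain $\Cc$ to the domain $\Sc\cap\Cc$, which is harmless precisely because the condition is checked state by state.
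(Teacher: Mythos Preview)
Your proposal is correct and follows essentially the same approach as the paper: compute $L_\bg h = L_\bg L_\bf \psi$ (using $L_\bg \psi \equiv \bzero$ from relative degree $2$), then observe that the hypothesis makes the antecedent of the OD-CBF-type check vacuous on $\Sc\cap\Cc$, so \eqref{eq:OD-HOCBC} holds for any $\alpha\in\Ke$. The paper's proof compresses this into a single sentence, while you spell out the reduction via Lemma~\ref{lem:OD-CBC_check} and the careful transfer from $\Cc$ to $\Sc\cap\Cc$; no substantive difference.
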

\begin{proof}
    Because ${L_\bg h(\bx)  = L_\bg L_\bf\psi(\bx)}$,
    \eqref{eq:OD-HOCBC} holds for any ${\alpha \in \Ke}$ under the assumption of the corollary, and thus, $\psi$ is an OD-HOCBF.
\end{proof}

With an OD-HOCBF, we establish that ${\Sc \cap \Cc}$ is a safe set.
\begin{proposition}\label{prop:OD-HOCBF}
    If $\psi$ is an OD-HOCBF of degree ${r=2}$ for~\eqref{sys:ctrl_affine} on $\Sc\cap\Cc$ and $\frac{\partial\psi}{\partial \bx}(\bx)\neq \bzero$ when $\psi(\bx)=h(\bx)=0$, then $\Sc\cap\Cc$ is a safe set.
\end{proposition}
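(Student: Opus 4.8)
The plan is to exhibit an explicit locally Lipschitz feedback that renders $\Sc\cap\Cc$ forward invariant for~\eqref{sys:closedloop}; since $\Sc\cap\Cc\subseteq\Sc$ by definition, this makes $\Sc\cap\Cc$ a safe set. For the feedback I would use the OD-CBF-QP~\eqref{eq:OD-CBF-QP} built from the function $h$ in~\eqref{eq:HOCBF}. Condition~\eqref{eq:OD-HOCBC} is exactly the OD-CBF condition~\eqref{eq:OD-CBC} for $h$, but only guaranteed on $\Sc\cap\Cc$; reasoning as in the proof of Lemma~\ref{lem:QPsolution}, the QP is feasible for every $\bx\in\Sc\cap\Cc$, and — as in the proof of Proposition~\ref{prop:lipschitz} — feasibility in fact persists on an open neighborhood of each point of $\partial\Cc\cap\Sc$ (using that $h(\bx)=0$ together with $L_\bg h(\bx)=\bzero$ forces $L_\bf h(\bx)>0$ there, the reformulation of~\eqref{eq:OD-HOCBC}), as well as trivially on the open set where $h(\bx)>0$. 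The resulting controller $\bk$ is locally Lipschitz on this region and, being the QP solution, satisfies $L_\bf h(\bx)+L_\bg h(\bx)\bk(\bx)\geq-\theta(\bx)\alpha(h(\bx))$ with $\theta(\bx)\geq0$ wherever the QP is feasible; the closed loop $\bF=\bf+\bg\bk$ is locally Lipschitz, so solutions exist and are unique.

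Forward invariance of $\Sc\cap\Cc$ I would then obtain by a cascade/bootstrapping argument resting on two scalar facts. First, wherever the QP is feasible the inequality $\dot h\geq-\theta\alpha(h)$ with $\theta\geq0$ keeps $h$ nonnegative once it is: if $h(s^*)=0$ and $h<0$ on an interval $(s^*,t]$ to its right, then $\alpha(h)<0$ there, hence $\dot h\geq0$ a.e.\ on that interval, contradicting $h(t)<h(s^*)=0$. Second, because $\psi$ has relative degree $2$ we have $L_\bg\psi\equiv\bzero$, so along the closed loop $\dot\psi=L_\bf\psi=h-\alpha_1(\psi)$, which is $\geq-\alpha_1(\psi)$ whenever $h\geq0$; the analogous scalar argument (equivalently, the Nagumo-type Lemma~\ref{lem:not-nagumo} applied to $\psi$ on $\Cc$, which is where $\tfrac{\partial\psi}{\partial\bx}(\bx)\neq\bzero$ on $\{\psi=0\}\cap\Cc$ is used) then keeps $\psi$ nonnegative for as long as the trajectory stays in $\Cc$. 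Now fix $\bx_0\in\Sc\cap\Cc$ with maximal solution on $[0,T)$ and let $\tau$ be the supremum of times $t$ with $h(\bx(s))\geq0$ for all $s\le t$; on $[0,\tau]$ the trajectory lies in $\Cc$, hence by the second fact in $\Sc$, hence in $\Sc\cap\Cc$. If $\tau<T$, then $\bx(\tau)\in\Sc\cap\Cc$ with either $h(\bx(\tau))>0$ or $h(\bx(\tau))=0$ and $\psi(\bx(\tau))\geq0$, i.e.\ $\bx(\tau)\in\partial\Cc\cap\Sc$; either way the QP is feasible on a neighborhood of $\bx(\tau)$, so the first fact keeps $h\geq0$ for a little while past $\tau$, contradicting maximality of $\tau$. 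Thus $\tau=T$ and $\bx(t)\in\Sc\cap\Cc$ on the whole interval of existence, and standard arguments (compactness of $\Sc\cap\Cc$, or boundedness along the constrained trajectory) give existence for all $t\ge0$.

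I expect the main obstacle in writing this out to be precisely the corner $\psi(\bx)=h(\bx)=0$. There $\dot\psi=L_\bf\psi=h=0$, so first-order information about $\psi$ alone does not prevent $\psi$ from dipping negative; the argument must be routed through the QP's maintenance of $h\geq0$ — which, crucially, holds on a full $\real^n$-neighborhood of the corner, not merely on $\Sc\cap\Cc$, because at such a point $h=0$ and either $L_\bg h\neq\bzero$ (so the $\omega$-free supremum in~\eqref{eq:OD-HOCBC} is already $+\infty$ nearby) or $L_\bg h=\bzero$ with $L_\bf h>0$ (so it stays positive nearby) — and then transferred to $\psi$ via the identity $\dot\psi=h-\alpha_1(\psi)$. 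The hypothesis $\tfrac{\partial\psi}{\partial\bx}(\bx)\neq\bzero$ on $\{\psi=h=0\}$ is exactly the regular-value input needed to run the $\psi$-side Nagumo/BF step at the corner (away from it, at $\{\psi=0\}\cap\Cc$ with $h>0$, one has $\dot\psi=h>0$ and so $\tfrac{\partial\psi}{\partial\bx}\neq\bzero$ automatically; $\tfrac{\partial h}{\partial\bx}\neq\bzero$ at the corner is likewise free from~\eqref{eq:OD-HOCBC}). The remaining pieces are routine specializations of the OD-CBF theory already established in Lemma~\ref{lem:QPsolution}, Proposition~\ref{prop:lipschitz}, and Theorem~\ref{thm:OD-CBF}.
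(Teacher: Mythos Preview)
Your approach is correct and takes a genuinely different route from the paper. The paper proves the proposition directly via Nagumo's theorem: with the QP controller~\eqref{eq:OD-CBF-QP} defining the closed loop, it partitions $\partial(\Sc\cap\Cc)$ into the three pieces $\{\psi=0,\,h>0\}$, $\{\psi>0,\,h=0\}$, and the corner $\{\psi=h=0\}$, and verifies at each that $\bF(\bx)$ lies in the tangent cone of $\Sc\cap\Cc$ by checking the signs of $\tfrac{\partial\psi}{\partial\bx}\bF$ and $\tfrac{\partial h}{\partial\bx}\bF$. Your cascade argument instead first secures $h\geq 0$ along trajectories via the QP inequality $\dot h\geq-\theta\alpha(h)$ (correctly using that feasibility and Lipschitzness persist on a full $\real^n$-neighborhood of $\partial\Cc\cap\Sc$, and trivially on $\{h>0\}$), and then bootstraps to $\psi\geq 0$ through the relative-degree-two identity $\dot\psi=L_\bf\psi=h-\alpha_1(\psi)$, which is independent of $\bk$. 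The paper's route is shorter and more geometric; yours is more dynamical but has a bonus you did not quite claim: the scalar comparison ``$h\geq 0$ and $\psi<0\Rightarrow\dot\psi>0$'' that you sketch nowhere uses the hypothesis $\tfrac{\partial\psi}{\partial\bx}\neq\bzero$ at the corner --- that regular-value condition is only needed if you literally invoke Lemma~\ref{lem:not-nagumo}, not if you run the direct comparison. In the paper's proof, by contrast, the hypothesis is genuinely used to pin down the tangent cone at the corner as $\{\bv:\tfrac{\partial\psi}{\partial\bx}\bv\geq 0,\ \tfrac{\partial h}{\partial\bx}\bv\geq 0\}$.
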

\begin{proof}
    Let the QP-based controller $\bk$ in 
    \eqref{eq:OD-CBF-QP} define the closed-loop vector field~\eqref{sys:closedloop}. Since $h$ is an OD-CBF, Proposition~\ref{prop:lipschitz} guarantees that $\bk$ is locally Lipschitz at each ${\bx\in\Sc\cap\Cc}$, and so is the closed-loop vector field~$\bF$. We first note that in the interior of the set ${\Sc\cap\Cc}$,
    the vector field is trivially in the set's tangent cone~\cite[Def. 4.9]{Blanchini}.
    
    Now, consider the points on the boundary of the set
    ${\Sc\cap\Cc}$.
    If ${\psi(\bx) = 0}$ and ${h(\bx) > 0}$, then by definition~\eqref{eq:HOCBF} of $h$ and noting that ${L_\bg\psi(\bx) = 0}$, we have
    $
    \frac{\partial \psi}{\partial \bx}(\bx) \bF(\bx)=\frac{\partial \psi}{\partial \bx}(\bx) \bf(\bx) > 0
    $
    and
    ${\frac{\partial \psi}{\partial \bx}(\bx) \neq \bzero}$.
    Therefore, the closed-loop vector field $\bF(\bx)$ belongs to the tangent cone at $\bx$, given by:
    $$
    \Tc(\bx) = \setdefB{\bv\in\real^n}{\frac{\partial \psi}{\partial \bx}(\bx) \bv \geq 0}.
    $$
    
    If ${\psi(\bx)>0}$ and ${h(\bx)=0}$, then 
    ${
    \frac{\partial h}{\partial \bx}(\bx) \bF(\bx) \geq 0
    }$
    because the resulting closed-loop system enforces \eqref{eq:OD-CBF_enforces}. Since $h$ is an OD-CBF, we have $\frac{\partial h}{\partial \bx}(\bx)\neq \bzero$ when $h(\bx)=0$ because the left hand side of~\eqref{eq:OD-CBC} would evaluate to zero otherwise. Thus, the closed-loop vector field~$\bF(\bx)$ belongs to the tangent cone:
    $$
    \Tc(\bx) = \setdefB{\bv\in\real^n}{\frac{\partial h}{\partial \bx}(\bx)\bv \geq 0}.
    $$

    Finally, if ${\psi(\bx)=h(\bx)=0}$, the constraint~\eqref{eq:OD-CBF_enforces} enforces that
    ${
    \frac{\partial h}{\partial \bx}(\bx) \bF(\bx) \geq 0
    }$,
    whereas the definition~\eqref{eq:HOCBF} of $h$ gives
    ${
    \frac{\partial \psi}{\partial \bx}(\bx) \bF(\bx)=\frac{\partial \psi}{\partial \bx}(\bx) \bf(\bx) = 0
    }$.
    The assumption that $\frac{\partial\psi}{\partial \bx}(\bx)\neq \bzero$ ensures that the tangent cone is given by:
    $$
    \Tc(\bx) = \setdefB{\bv\in\real^n}{\frac{\partial \psi}{\partial \bx}(\bx)\bv \geq 0~\wedge~\frac{\partial h}{\partial \bx}(\bx)\bv \geq 0},
    $$
    and the vector field $\bF(\bx)$ belongs to it. 

    As a result of Nagumo's theorem~\cite{MN:42}, the set $\Sc\cap\Cc$ is forward invariant under the controller ${\bu =\bk(\bx)}$ in~\eqref{eq:OD-CBF-QP}, because the locally Lipschitz vector field $\bF(\bx)$ is in the tangent cone of the set ${\Sc\cap\Cc}$ for all ${\bx\in\Sc\cap\Cc}$.
    Thus, this set is control invariant, and it is also a safe set as ${\Sc\cap\Cc\subseteq \Sc}$.
\end{proof}
Due to the similarity of the OD-HOCBF condition~\eqref{eq:OD-HOCBC} and~\eqref{eq:OD-CBC}, all results regarding the Lipschitz continuity of the QP-based controllers hold, and are omitted in the interest of space.  We illustrate the flexibility OD-HOCBFs in addressing higher-order safety constraints with the following example.

\subsection{Double Integrator Example}

Consider a double integrator with state ${\bx = (x, \dot x) \in \real^2}$ and control input ${u\in\real}$ with a safety constraint:
\begin{equation}\label{sys:double}
    \dot \bx = \begin{bmatrix}
        \dot x &
        u
    \end{bmatrix}^\top,\quad \psi(\bx) = 1 - x^2. 
\end{equation}
Since $\psi$ has relative degree ${r=2}$ with ${L_\bg L_\bf\psi(\bx)= -2x}$, we employ the HOCBF framework and introduce:
\begin{align*}
h(\bx) & = L_\bf\psi(\bx) +\alpha_1(\psi(\bx))
= -2x\dot x +\alpha_1(1-x^2),
\end{align*}
using a continuously differentiable ${\alpha_1 \in \Ke}$. It has been shown in the literature, see~\cite[Ex. 1]{PO-MHC-TGM-ADA:25-csl}, that there exists no $\alpha$ that makes $h$ satisfy the  condition~\eqref{eq:CBC_check} for CBF.

Nevertheless, we can leverage the OD-HOCBF framework. We can verify the assumption of Corollary~\ref{coro:OD-HOCBF_simple_check} via contradiction: when $L_\bg h(\bx)= -2x =0$, we have $h(\bx) = \alpha_1(1) > 0$. Thus, $\psi$ is an OD-HOCBF. In addition, when $\psi(\bx)=0$, we have $\frac{\partial\psi}{\partial\bx}(\bx) = \begin{bmatrix}
    -2x & 0
\end{bmatrix}^\top =  \begin{bmatrix}
    \pm 2 & 0
\end{bmatrix}^\top\neq \bzero$. As such, $\Sc\cap \Cc$ is a safe set, and we can use the QP-based controller $\bk$ in \eqref{eq:OD-safetyfilter} to render the set forward invariant.

Alternatively, we may use ${\theta(\bx) = 4\dot x^2/\alpha_2(\alpha_1(1))+1}$ with any class-$\Kc^e$ $\alpha_2$. This predefined choice of $\theta$ ensures that ${L_\bf h(\bx)=-2\dot x^2 > -\theta(\bx)\alpha_2(\alpha_1(1))}$ when ${L_\bg L_\bf \psi(\bx)=-2x=\bzero}$. We can design a QP-based controller to optimize for $\bu=\bk(\bx)$ with this $\omega=\theta(\bx)$. The details of this technique will be in our future work, as potentially this technique may allow the analysis of safety performance, due to its simpler nature in comparison to the complex $\theta$ obtained from a QP in~\eqref{eq:OD-decayrate}.

We compute QP-based controllers using the standard HOCBF framework and OD-HOCBF framework, as well as the one with predefined $\theta$. The relevant parameters used are: $\bk_\des (\bx) = \bzero$, $\Gamma = \bI$, $p=1$, $\alpha_1(s)=\alpha_2(s) = 2s$, $\theta_\des=1$, and $\varepsilon=0.1$. Fig.~\ref{fig:double_integrator} shows that the standard HOCBF-QP fails as it becomes unbounded and discontinuous where $L_\bg h(\bx)=\bzero$. On the other hand, the controllers from the OD-HOCBF are well-behaved. We also simulate the control signal for the system with an initial condition
$\bx_0=(
    -0.1, 1.5)$.
The results support the statements in Remark~\ref{rmk:denominator}.
Another CBF construction method, the rectified CBF (ReCBF), can also benefit from this observation, which we discuss next.

\begin{figure}
    \centering
    \includegraphics[width=\linewidth]{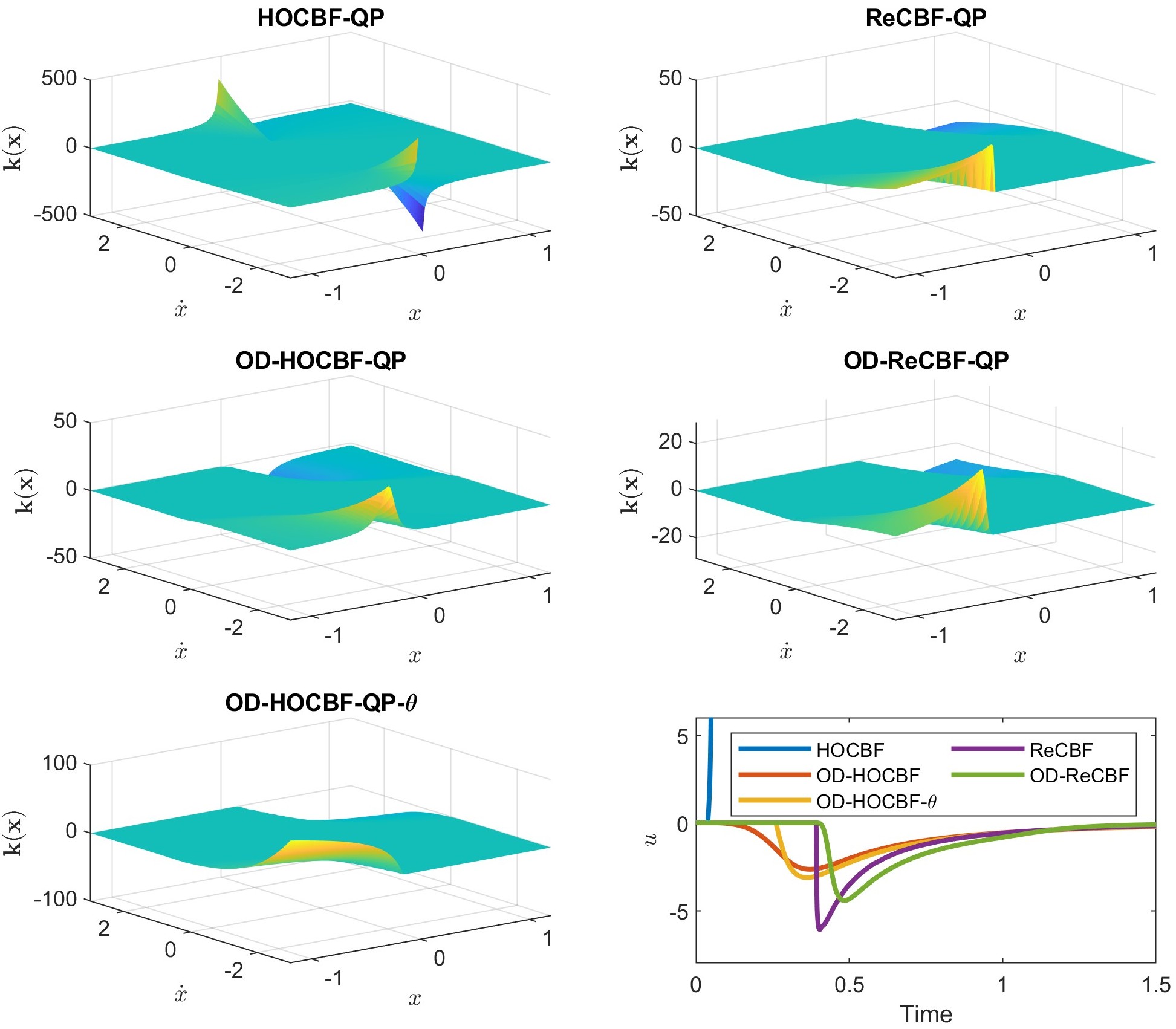}
    \vspace{-0.8cm}
    \caption{Different safe controllers~$\bk$ for the double integrator system~\eqref{sys:double}. }
    \label{fig:double_integrator}
    \vspace{-0.6cm}
\end{figure}

\subsection{Optimal-Decay ReCBF}

For a constraint function $\psi$ with relative degree $r=2$, the ReCBF framework~\cite{PO-MHC-TGM-ADA:25-csl} defines the CBF candidate:
\begin{equation}\label{eq:ReCBF}
    h(\bx) = \psi(\bx) - \relu \Big( - C_1 \big( L_\bf \psi(\bx) + \alpha_1 (\psi(\bx)) -\varepsilon\big)^3 \Big),
\end{equation}
with a class-$\Ke$ $\alpha_1$, a positive constant $C_1>0$, and a well-posedness parameter ${\varepsilon>0}$ introduced to facilitate the strict satisfaction of CBF condition~\eqref{eq:CBC}.
For shorthand notation, we use ${\Gamma(s) = \relu(-C_1s^3)}$, which is twice continuously differentiable with ${\Gamma'(s) = -\relu(-3C_1s\vert s\vert )}$.
It was shown in~\cite{PO-MHC-TGM-ADA:25-csl} that if:
$$
L_\bg L_\bf \psi(\bx) = 0 \implies L_\bf\psi(\bx)\geq -\alpha_1(\psi(\bx))+\varepsilon,
$$
for all ${\bx \in \Cc}$, then $h$ is a CBF.
The following optimal-decay counterpart relaxes this condition to:
\begin{equation}\label{eq:OD-ReCBC}
    h(\bx)=0 \;\wedge\; L_\bg L_\bf\psi(\bx)= 0 \implies L_\bf\psi(\bx)\geq -\alpha_1(\psi(\bx))+\varepsilon. 
\end{equation}

\begin{proposition} \label{prop:OD-ReCBF}
    Consider a constraint function~$\psi$ from~\eqref{eq:safety_constraint} with relative degree ${r=2}$ for~\eqref{sys:ctrl_affine} and   $h$ in~\eqref{eq:ReCBF}.
    If~\eqref{eq:OD-ReCBC} holds for all ${\bx \in \Cc \subset \Sc}$, then $h$ is an OD-CBF for~\eqref{sys:ctrl_affine} on $\Sc$. 
\end{proposition}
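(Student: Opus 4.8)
The plan is to reduce the statement, via Lemma~\ref{lem:OD-CBC_check}, to verifying the pointwise implication $h(\bx)=0 \,\wedge\, L_{\bg}h(\bx)=\bzero \implies L_{\bf}h(\bx)>0$ for every $\bx\in\Sc$; the equivalence established in the proof of Lemma~\ref{lem:OD-CBC_check} is purely pointwise, so it applies verbatim with $\Sc$ in place of $\Cc$. Writing $s(\bx)\coloneqq L_{\bf}\psi(\bx)+\alpha_1(\psi(\bx))-\varepsilon$, so that $h=\psi-\Gamma(s)$, I would first record the needed calculus: since $\psi$ has relative degree $r=2$ we have $L_{\bg}\psi\equiv\bzero$, so the chain rule (valid since $\Gamma$ is twice continuously differentiable) gives $L_{\bg}h = -\Gamma'(s)\,L_{\bg}L_{\bf}\psi$ and $L_{\bf}h = L_{\bf}\psi - \Gamma'(s)\big(L_{\bf}^2\psi + \alpha_1'(\psi)\,L_{\bf}\psi\big)$. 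I would also note the two defining features of the rectifier: for $s\ge 0$ both $\Gamma(s)=0$ and $\Gamma'(s)=0$, whereas $\Gamma'(s)\neq 0$ for $s<0$; and since $\Gamma\ge 0$ everywhere, $h\le\psi$, so $\Cc\subseteq\Sc$ holds automatically.

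Next I would fix $\bx\in\Sc$ with $h(\bx)=0$ and $L_{\bg}h(\bx)=\bzero$, noting that $h(\bx)=0$ already places $\bx\in\Cc$, and then split on the sign of $s(\bx)$. If $s(\bx)<0$, then $\Gamma'(s(\bx))\neq 0$, so $L_{\bg}h(\bx)=\bzero$ forces $L_{\bg}L_{\bf}\psi(\bx)=\bzero$; invoking hypothesis~\eqref{eq:OD-ReCBC} at this $\bx$ (legitimate, since $\bx\in\Cc$ and $h(\bx)=0$) gives $L_{\bf}\psi(\bx)\ge -\alpha_1(\psi(\bx))+\varepsilon$, i.e.\ $s(\bx)\ge 0$, a contradiction --- so this case is vacuous. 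Hence $s(\bx)\ge 0$, which gives $\Gamma(s(\bx))=0$ and therefore $0=h(\bx)=\psi(\bx)$; substituting $\psi(\bx)=0$ (and $\alpha_1(0)=0$) yields $s(\bx)=L_{\bf}\psi(\bx)-\varepsilon\ge 0$, i.e.\ $L_{\bf}\psi(\bx)\ge\varepsilon$. Finally, $\Gamma'(s(\bx))=0$ collapses the formula for $L_{\bf}h$ to $L_{\bf}h(\bx)=L_{\bf}\psi(\bx)\ge\varepsilon>0$, which is the required strict inequality. Together with Lemma~\ref{lem:OD-CBC_check} this would show $h$ is an OD-CBF for~\eqref{sys:ctrl_affine} on $\Sc$.

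I expect the only genuine obstacle to be setting up the case analysis correctly around the rectification mechanism: the subtle point is that $L_{\bg}h(\bx)=\bzero$ does \emph{not} by itself force $L_{\bg}L_{\bf}\psi(\bx)=\bzero$, because $\Gamma$ has a flat piece on $\{s\ge 0\}$ where $L_{\bg}h$ vanishes identically, and it is exactly this extra possibility that lets the relaxed condition~\eqref{eq:OD-ReCBC} --- which only quantifies over points with $h(\bx)=0$, rather than over all of $\Cc$ --- suffice. The remaining ingredients (the chain-rule identities for $L_{\bf}h$ and $L_{\bg}h$, the elementary sign facts about $\Gamma$ and $\Gamma'$, and the bookkeeping that pins down $\psi(\bx)=0$ and $L_{\bf}\psi(\bx)\ge\varepsilon$) are routine, with the well-posedness parameter $\varepsilon>0$ supplying the strictness demanded by the OD-CBF definition.
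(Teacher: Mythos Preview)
Your proposal is correct and follows essentially the same route as the paper: both reduce to Lemma~\ref{lem:OD-CBC_check}, compute $L_{\bg}h=-\Gamma'(s)\,L_{\bg}L_{\bf}\psi$, and use the dichotomy on the sign of $s(\bx)$ (the paper phrases this as a contrapositive of~\eqref{eq:OD-ReCBC} rather than an explicit case split, but the logic is identical). Your explicit remarks that $h(\bx)=0$ forces $\bx\in\Cc$ and that the equivalence in Lemma~\ref{lem:OD-CBC_check} is pointwise (so it applies with $\Sc$ in place of $\Cc$) are useful clarifications that the paper leaves implicit.
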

\begin{proof}
First, we establish that for $h$ in~\eqref{eq:ReCBF}:
\begin{equation} \label{eq:ReCBF_Lgh}
    L_\bg h(\bx) = -\Gamma'(L_\bf \psi(\bx) + \alpha_1 (\psi(\bx)) -\varepsilon) L_\bg L_\bf\psi(\bx).
\end{equation}
Therefore,
we have
the contrapositive (maintaining the assumption $h(\bx)=0$) of the property~\eqref{eq:OD-ReCBC} as:
\begin{align*}
& h(\bx) = 0 \;\wedge\; L_\bf\psi(\bx) < -\alpha_1(\psi(\bx))+\varepsilon \\
& \qquad \implies L_\bg L_\bf\psi(\bx)\neq \bzero
\implies L_\bg h(\bx)\neq \bzero,
\end{align*}
where the second implication is due to~\eqref{eq:ReCBF_Lgh} and ${\Gamma'(s) \neq 0}$ when ${s<0}$. Consequently,  the following:
$$
h(\bx)=0~\wedge~L_\bg h(\bx)=\bzero \implies L_\bf\psi(\bx)\geq -\alpha_1(\psi(\bx))+\varepsilon
$$
must hold to avoid a contradiction. In addition, by construction of a ReCBF~\eqref{eq:ReCBF}, if $L_\bf \psi(\bx) \geq -\alpha_1(\psi(\bx))+\varepsilon$, then $h(\bx)=\psi(\bx)$ and $L_\bf h(\bx)=L_\bf\psi(\bx)$ (since $\Gamma$ and $\Gamma'$ evaluate to zero). Therefore,
$$
h(\bx)=0~\wedge~L_\bg h(\bx)=\bzero \implies L_\bf h(\bx)> 0,
$$
making it an OD-CBF according to Lemma~\ref{lem:OD-CBC_check}.
\end{proof}
Proposition~\ref{prop:OD-ReCBF} establishes that ReCBFs are also OD-CBFs if~\eqref{eq:OD-ReCBC} holds, and thus the results in Section~\ref{sec:OD-CBF} apply. Simulations for the double integrator~\eqref{sys:double} with the QP controllers using ReCBF and OD-ReCBF are shown in Fig.~\ref{fig:double_integrator}.

\section{Application to Satellite Control}
For a more realistic example, we apply the OD-HOCBF and OD-ReCBF frameworks to the safe control of a satellite. We consider the satellite dynamics on the orbital plane under Newton's gravitational model, in polar coordinates:
\begin{eqnarray}
\label{eqn:spacecraft}
\underbrace{ \begin{bmatrix}\dot r \\ \dot \theta  \\ \ddot r \\ \ddot \theta  \end{bmatrix} }_{\dot{\bx}} =  \underbrace{\begin{bmatrix}\dot r \\ \dot \theta  \\ r\dot \theta^2 - \mu r/(r^2)^{3/2} \\ -(2/r)\dot r \dot \theta \\   \end{bmatrix}}_{\bf(\bx)}+ \underbrace{\begin{bmatrix}
 0 \\ 0 \\ u_1 \\ u_2/r 
 \end{bmatrix}}_{\bg(\bx) \bu} ,
\end{eqnarray}
where ${\mu= 2.346\times10^{-9}}$ $\text{km}^3/\text{s}^2$ is the gravitational parameter for the central body. The radius of the central body is $R=0.3097$ km, and we are interested in maintaining the satellite within the range of $1.8R$ to $2.2R$. To this end, we use the safety constraint function:
$\psi(\bx) = 1-(r-2R)^2/(0.2R)^2$. We note that, like in the double integrator example, $\psi$ is not an HOCBF, due to the loss of relative degree where $r=2R$. This issue can be addressed using either the OD-HOCBF or the (OD-)ReCBF framework, the latter of which is developed specifically for this issue.

We use the QP~\eqref{eq:OD-CBF-QP} with $h$ produced by~\eqref{eq:HOCBF} and~\eqref{eq:ReCBF}. The relevant parameters are: ${\bk_\des (\bx) = \bzero}$, ${\Gamma = \bI}$, ${p=1}$, ${\alpha_1(s)=s/600}$, ${\alpha_2(s) = s/200}$, ${\theta_\des=1}$, ${\varepsilon=0.0141}$, and ${C_1=271.44}$. We simulate the satellite's trajectories from a random initial condition in the safe set 
$\bx_0=(0.6649,2.034, 2.346, 8.097)$
for a period of two days. The resulting trajectories are given in Fig.~\ref{fig:satellite}. 
Over
the two-day simulation timeframe, the OD-HOCBF framework produces a control signal that is smoother and much less aggressive compared to that of the OD-ReCBF. Yet, in both cases, safety is maintained under our developed frameworks.

\begin{figure}
    \centering
    \includegraphics[width=\linewidth]{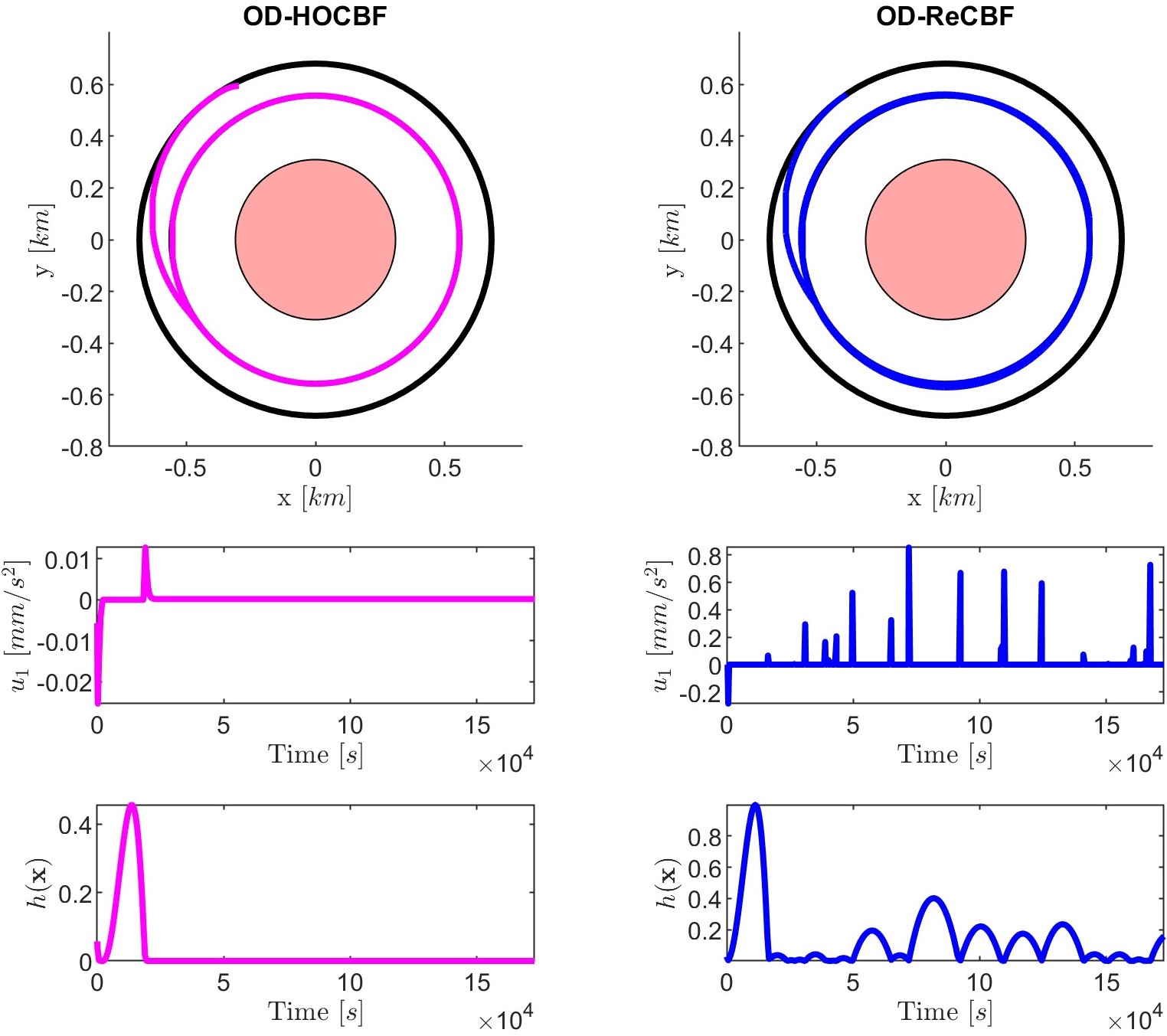}
    \vspace{-0.75cm}
    \caption{Satellite orbit over the period of two days. \textbf{Top:} State trajectory of the satellites for each optimal-decay method. \textbf{Middle:} Control signals generated from the QP controllers. \textbf{Bottom:} The function $h$ for each method that remains positive, certifying system safety.%
    \vspace{-0.75cm}
    } 
    \label{fig:satellite}
\end{figure}

\section{Conclusion}

We established a rigorous formalization of optimal-decay control barrier functions (OD-CBFs).
We derived conditions to verify the validity of OD-CBFs, established the feasibility and closed-form solution of quadratic program-based controllers, and proved the forward invariance and stability of the resulting safe set.
We highlighted the advantages of this framework in the context of higher-order CBF constructions, and showed that it provides beneficial properties for cases with vanishing relative degrees.
We demonstrated the applicability of this framework on a satellite control problem.

The main limitation of our results is that they rely on the assumption of an unbounded admissible input set, which is somewhat in contrast to the original motivation of OD-CBFs in \cite{AmesIEEEA20} \cite{ZengACC21}. Yet, as argued earlier, bounded inputs are not the only factor that make the search for a compatible class $\mathcal{K}^e$ function challenging, and the results presented herein provide a comprehensive understanding of how OD-CBFs address the pitfalls of CBFs with a vanishing relative degree, especially for higher order safety constraints. Future efforts will focus on extending our results to the bounded input case.

\bibliographystyle{ieeetr}
\bibliography{
    bib/alias,
    bib/PO,
    bib/main-Pio,
    bib/MC}
\end{document}